\newtheorem{defi}{Definition}[section]
\newtheorem{lem}{Lemma}[section]
\newtheorem{rem}{Remark}[section]
\newtheorem{thm}{Theorem}[section]
\newtheorem{coro}{Corollary}[section]
\newtheorem{conjecture}{Conjecture}[section]  
\tikzset{radiation/.style={{decorate,decoration={expanding waves,angle=90,segment length=4pt}}},
         antenna/.pic={
        code={\tikzset{scale=5/10}
            \draw[semithick] (0,0) -- (1,4);
            \draw[semithick] (3,0) -- (2,4);
            \draw[semithick] (0,0) arc (180:0:1.5 and -0.5);
            \node[inner sep=4pt] (circ) at (1.5,5.5) {};
            \draw[semithick] (1.5,5.5) circle(8pt);
            \draw[semithick] (1.5,5.5cm-8pt) -- (1.5,4);
            \draw[semithick] (1.5,4) ellipse (0.5 and 0.166);
            \draw[semithick,radiation,decoration={angle=60}] (1.5cm+8pt,5.5) -- +(0:2);
            \draw[semithick,radiation,decoration={angle=60}] (1.5cm-8pt,5.5) -- +(180:2);
  }}
}
\tikzset{reception/.style={{decorate,decoration={triangles,angle=-90,segment length=6pt}}},
         receiver/.pic={
        code={\tikzset{scale=5/10}
            \draw[semithick] (0,0) -- (1,4);
            \draw[semithick] (3,0) -- (2,4);
            \draw[semithick] (0,0) arc (180:0:1.5 and -0.5);
            \node[inner sep=4pt] (circ) at (1.5,5.5) {};
            \draw[semithick] (1.5,5.5) circle(8pt);
            \draw[semithick] (1.5,5.5cm-8pt) -- (1.5,4);
            \draw[semithick] (1.5,4) ellipse (0.5 and 0.166);
            \draw[semithick,reception,decoration={angle=45}] (1.5cm-24pt,5.5) -- +(0:1);
  }}
}
\tikzset{receptionR/.style={{decorate,decoration={triangles,angle=90,segment length=6pt}}},
         receiverR/.pic={
        code={\tikzset{scale=5/10}
            \draw[semithick] (0,0) -- (1,4);
            \draw[semithick] (3,0) -- (2,4);
            \draw[semithick] (0,0) arc (180:0:1.5 and -0.5);
            \node[inner sep=4pt] (circ) at (1.5,5.5) {};
            \draw[semithick] (1.5,5.5) circle(8pt);
            \draw[semithick] (1.5,5.5cm-8pt) -- (1.5,4);
            \draw[semithick] (1.5,4) ellipse (0.5 and 0.166);
            \draw[semithick,receptionR,decoration={angle=45}] (1.5cm+24pt,5.5) -- +(180:1);
  }}
}
\begin{document}
\title{\textbf{On the symmetries of electrodynamic interactions}}
\author{Hernán Gustavo Solari\footnote{Departamento de Física, FCEN-UBA and IFIBA-CONICET;
Pabellón I, Ciudad Universitaria (1428) - C.A.B.A - Argentina. email:
solari@df.uba.ar Orcid: 0000-0003-4287-1878}
{\ }and
Mario Alberto Natiello\footnote{Centre for Mathematical Sciences, Lund University.
Box 118, S 221 00 LUND, Sweden. email: mario.natiello@math.lth.se
(corresponding author) Orcid: 0000-0002-9481-7454}}

\date{}

\maketitle

\begin{abstract}
\begin{normalsize}
  \noindent
  The development of relational electromagnetism after Gauss
appears to stop around 1870. Maxwell recognised relational electromagnetism
as mathematically equivalent to his own formulae and called for an explanation
of why so different conceptions have such a large part in common.
We reconstruct relational electromagnetism guided by the No Arbitrariness Principle. 
Lorenz’ idea of electromagnetic waves, together with the ``least action
principle'' proposed by Lorentz are enough to derive Maxwell's
equations, the continuity equation and the Lorentz' force. We show that there must 
be two more symmetries in electromagnetism: a descriptive one expressing
source/detector relations, and another relating perceptions of the same source by
detectors moving with different (constant) relative velocities. The Poincaré
group relates perceived fields by different receivers and Lorentz boosts
relate source/detector perceptions. We answer Maxwell's philosophical
question showing how similar theories can be abduced using different
inferred entities. Each form of abduction implies an interpretation and
a facilitation of the theoretical construction.\\
\textbf{Keywords: }  critical epistemology; rationalism; relational
electromagnetism; Lorentz transformations; Doppler effect;

\footnote{\noindent Received on xxxxxx. Accepted on  yyyyyy. Published on zzzzzz.
doi: 10.23756/sp.v10i1.811. ISSN 2282-7757; eISSN 2282-7765. \copyright The Authors.
This paper is published under the CC-BY licence agreement.}
\end{normalsize}
\end{abstract}

%
%

\newpage

\section{Introduction}

The notion that science, and in particular physics, does not depend
on philosophical or psychological factors is usually manifested by
scientists and the society at large. However, this view confuses what
science should be with how science is actually practised. Following
Peirce we can say that research stops when doubt is appeased and a
(temporary) belief is reached. The condition for the cessation of
doubt might have psychological and philosophical components. During
the late XIX century and the beginning of the XX century an abrupt
change in this condition can be verified \citep{sola22c} finally
leading to new physical understanding \footnote{Meaning the acceptance of a theory by a community}
and a new epistemology \citep{Solari22-phenomenologico}. There is
a relation of precedence: psychological needs (such as the need for
analogies or to incorporate learned habits) determine, in part, physical
theories which in turn determine philosophy. Denying the existence
of the first link we could claim that the Truth in physics forces
upon us the acceptance of some epistemologies and the rejection of
others. In contrast, for a critical philosophy such as Kant's \citep{kant98}
it is philosophy the science that surveils and, if necessary, corrects
all other human activities. Thus, for critical philosophy the sequence
must be: philosophy controls the sciences and the contributions by
psychological needs of scientists have no place and must be eliminated.

The symmetries of electromagnetic interactions played a central role
in the transformation underwent by physics, and with it by science,
during that period. Expectations imported from Mechanics did not fit
observations of electromagnetic phenomena, in particular the propagation
of electromagnetic interactions and light. Two alternatives circulated
around 1850, namely local propagation through some form of physical
medium in space (the ether) against delayed action at a distance.
The second alternative had faded away by the turn of the century,
although it was never proved wrong. The introduction (and subsequent
elimination) of the ether along with a second ingredient: the expectations
posed by society on science reshaped the way physicists approached
Nature. The progress of the industrial revolution expected science
to be the support of technological development, a goal not necessarily
identical to that of exploring Nature in order to understand it. The
utilitarian view of science advanced at the beginning of the second
industrial revolution in the Prussian empire proclaims its success
some 60 years later. With it comes an a--critical epistemology that
denies philosophy the right to examine the foundations of science
\citep{beiser2014after} as it is actually practised: the utilitarian,
capitalist, science.

Is it the same physics resulting from both forms of construction?
For the case of Mechanics most results coincide \citep{sola18b},
while foundational issues regarding the concept of inertial systems
drastically differ \citep{sola22}. 

The ether failed to provide a sound solution to these problems and
Special Relativity was advanced in 1905, being today the accepted
explanatory framework. However, already in 1867 Ludwig Lorenz suggested
an ether-free description of electromagnetism. While the interpretation
of electrodynamics in terms of special relativity must be rejected
as an acceptable theory under a rational construction \citep{Solari22-phenomenologico},
the success obtained by applying this theory to observable problems
and the absence of an alternative (at least) equally successful, consilient
and coherent \citep{whewell1840philosophy,whewell1858Novum} prevented
the criticism of its foundations. 

The combination of motion and coordinate description of electromagnetic
phenomena has several aspects. At least three elements are usually
present: Observer, Source (Emitter, Primary circuit) and Receiver
(Detector, Secondary circuit). However, not all motions are equally
relevant. The No Arbitrariness Principle (NAP)\citep{sola18b} (elaborating
on the idea that no knowledge about nature depends on arbitrary decisions)
suggests that the only motion that actually can influence results
is that between Source and Receiver. Moreover, in a relational description,
there is no other motion involved and the Observer is either absent
or sorted out through a group of symmetry transformations between
equivalent choices.

In this work we illustrate how these setups can be fully handled.
We assemble Electromagnetic theory in terms of classical epistemology;
hopefully achieving a better matching with experiments than current
theories and higher ``consilience'' \citep{thagard1978} (see also
\citep[p. XXXIX, Aphorism XIV]{whewell1840philosophy}). First, we
derive the set of equations of electromagnetism combining Lorenz'
approach with an ether-free version of Lorentz' action integral, unifying
and surpassing ideas that have not been fully investigated so far.
Further, we relate the electromagnetic description for the case where
source and receiver are at relative rest with the corresponding description
in a situation of relative motion, showing also how potentially controversial
concepts such as the ``velocity of light'' ${\displaystyle C=\left(\mu_{0}\epsilon_{0}\right)^{-\frac{1}{2}}}$
in different states of relative motion fit in this nineteenth century
framework. From the concept of reciprocal action (which is in the
philosophical basis of Newton's mechanics) we examine the arbitrariness
that has to be removed in Electromagnetic theory and then, the symmetry
groups that must be involved a-priori. This rational \footnote{The rational epistemology was presented by William Whewell \citep{whewell1840philosophy,whewell1858Novum,whewell1858History}
and further developed by Charles Peirce \citep{peir94} and its fundaments
where available by 1858 before the seminal works of Maxwell \citep{maxw64}
and Lorenz \citep{lore67}.}theory of Electromagnetism does not require any change in space-time
or epistemology.

We try to develop a method that allows all philosophers to grasp its
contents, thus rescuing physics from elitism. If science is to help
us to come into harmony with the universe, beginning with Planet Earth,
a new perspective of exemplary science must be reached, one aiming
at understanding and empathising with all living forms. Thus, the
aim of this work is political, but yet it is philosophical as well
as technical. If successful in our task (as we believe we are), we
can claim that there is no need to abandon the goal of understanding
nature and also that the utilitarian science aimed at ``dominating
nature'' (a prediction technique whose value is given by predictive
success), needs to be left behind if harmony in Planet Earth is our
goal.

\section{On symmetries\label{sec:On-sym}}

Physics sustains the idea that there is a world that reaches us through
the senses and is independent of the observer: the sensed-real. Although
every particular observation may depend on the observer, the collection
of observations points towards a common idea that we call reality,
or \emph{the real}. Thus, the relation between the sensed-real and
reality (the idealisation) plays a fundamental role. This starting
point has been called ``The fundamental antithesis of philosophy''
\citep[Ch. I]{whewell1858History}. Going from the sensed-real to
the real we must separate what belongs to reality from its circumstances
that result in particularities, which quite often are the consequence
of arbitrary decisions. Thus, we reserve the name of arbitrariness
for the observational and descriptive decisions that we have to make
when associating an ideal relation with an observable relation.

It would be desirable to present physical laws in pure abstract form,
without any arbitrary element, but it would be desirable as well,
for physical laws to be as accessible as possible to the mind. Since
abstraction imposes difficulties in grasping the meaning of such laws,
there is a trade-off that must be worked out between the two desires.
This trade-off results in the introduction of some (usually small)
set of arbitrary elements in the description, under the requirement
that such arbitrary elements could be eventually suppressed from the
presentation or, what is the same, that a change in the choice of
arbitrary elements results in an equivalent presentation. These ideas
lead immediately to the existence of a group of transformations relating
different choices of arbitrary elements. The group structure is the
result of the composition law of the transformation between presentations
of the laws under different arbitrary decisions. This is the central
idea under the ``No arbitrariness principle'' (NAP) \citep{sola18b}.

The introduction of an observer brings about the possibility of attaching
to it a Cartesian space for the description of the real and at the
same time it introduces the symmetries of the space (the arbitrary
element).

Moving directly into electromagnetism, we observe that all its fundamental
experiments reflect the influence of electromagnetic phenomena associated
to a pair of bodies (one of them labelled primary circuit, source,
emitter, etc., and the other secondary circuit, receiver, detector).
In the same form that space is not a possible subject of experimental
detection but spatial relations can be measured, electromagnetic fields
can only be detected by their effects on measuring devices, i.e.,
detectors. If the action of a source on a receiver can be addressed
with controlled degrees of influence from the rest of the universe,
in the limit of no influence, the idealised law describing the universe
of such relations must depend only on the relative position and motion
of source and receiver. Such notions can be found all over the foundational
work of Faraday \citep{fara39,fara44,fara55} and Maxwell \citep{maxw73}.

Electromagnetic phenomena imply the motion of electricity (whatever
electricity is, as Maxwell often said) and then, since what changes
the motion of bodies has been called \emph{forces}, we can associate
forces with the action of an electromagnetic (EM) body onto another
EM body. Actually, this use entails a generalisation of the concept
of force, since Newtonian forces change the motional status of macroscopic
bodies while microscopic (quantum) objects, such as electrons involved
in conduction currents, are not what classical mechanics had in mind
when Newton developed its laws. Moreover, if we envisage EM-forces
as Lorentz did, by adopting Weber's view of electrical atoms \citep{lorentz1892CorpsMouvants},
such forces must be identically described by observers whose motions
relate by Galilean coordinate transformations, and furthermore reciprocal
action must be expressed as a symmetry in some privileged systems
we call ``inertial frames'' \citep{thomson1884}. For example, the
symmetry inherent to Newton's third law is expressed as the equation
$F_{12}+F_{21}=0$ being invariant in front of Galilean changes of
coordinates (where $F_{ij}$ is the force on body $j$ originated
in the interaction with body $i$). Yet, we know at least since \citet{poin00}
(see \citep{sola18b} as well) that Newton's ``action and reaction
law'' is not compatible with delayed action at distance. As far as
we know, the form this symmetry takes in EM has not been shown so
far. We will display its effects in the present work.

When EM theory is moved from its original setting as an \emph{interaction
theory} into a \emph{field theory}, some symmetry is broken since
there are no longer two EM-bodies in reciprocal action but we are
thereafter concerned with only one of them, most frequently the \emph{source}.
This presentation of EM may be called the \emph{S-field}. With equivalent
arbitrariness we could shift the focus to the receiver and consider
an \emph{R-field} description. Both descriptions refer to the same
EM phenomena and are therefore related.

When the S-field, the field produced by the source, is perceived by
the source itself or by any extended EM-body not moving with respect
to the source, we call it \emph{S-by-S-field}. When considering the
same S-field as it is perceived by the receiver, we have the \emph{S-by-R-field}
description, see Figure \ref{fig:SbyX} (see Figure \ref{fig:Field descriptions}
for the corresponding \emph{R-field} description). The operation performed
on the description of the phenomenon is to identify one body or the
other with an extended EM-body in the reference frame of the observer.
As both approaches describe the same action, a transformation, possibly
dependent on the relative velocity between the EM-bodies, must relate
their expressions.

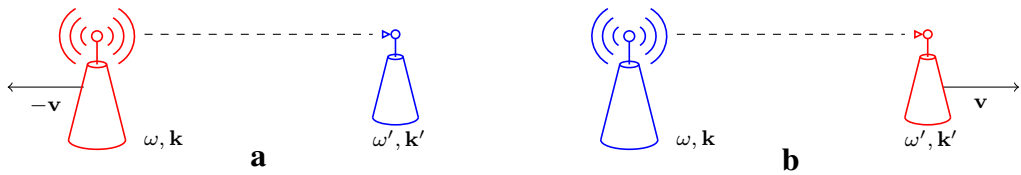
\begin{figure}[h]
\begin{tikzpicture} 
\draw (1.25,0) node {\scriptsize $\omega,\mathbf{k}$};
\draw (4.35,0) node {\scriptsize $\omega^\prime,\mathbf{k}^\prime$};
\draw[<-] (-0.8,0.7)  -- (0.2,0.7);
\draw[below] (-0.3,0.7) node {\scriptsize $-\mathbf{v}$};  
\path (0,0) pic[scale=0.5,color=red] {antenna};   
\path (4,0.3) pic[scale=0.4,color=blue] {receiver}; 
\draw[thin,dashed] (1.,1.4) -- (4,1.4) ; 
\draw (8.25,0) node {\scriptsize $\omega,\mathbf{k}$};
\draw (11.35,0) node {\scriptsize $\omega^\prime,\mathbf{k}^\prime$};
\draw[->] (11.5,0.7)  -- (12.5,0.7);
\draw[below] (12,0.7) node {\scriptsize $\mathbf{v}$};  
\path (7,0) pic[scale=0.5,color=blue] {antenna};   
\path (11,0.3) pic[scale=0.4,color=red] {receiver}; 
\draw[thin,dashed] (8,1.4) -- (11,1.4) ; 
\draw (2.5,-0.25) node {\bf a};
\draw (9.5,-0.25) node {\bf b};
\end{tikzpicture}

\caption{Field of the source (a) as seen by the receiver (S-by-R-field) and
(b) as seen by the source (S-by-S-field). Source to the left of each
image. In blue: the device at rest with the observer.\label{fig:SbyX}}
\end{figure}

For the case of multiple receivers we may want to consider the relation
among the different S-by-R$_{i}$-field descriptions of each receiver.
To connect $R_{1}$ with $R_{2}$ corresponds to the composition of
the transformations between each receiver and the source, namely $R_{1}\to S$
and (the inverse of) $R_{2}\to S$. The composition of transformations
yields a transformation between receivers, that will depend on the
relative velocities of $R_{1}$ and $R_{2}$ with respect to the source.
However, receiver-receiver transformations relate objects of equivalent
character, they are automorphisms and must form a group as well.

\begin{figure}[h]
\begin{tikzpicture} 
\draw (1.25,0) node {\scriptsize $\omega,\mathbf{k}$};
\draw (4.35,0) node {\scriptsize $\omega^\prime,\mathbf{k}^\prime$};
\draw[<-] (-0.8,0.7)  -- (0.2,0.7);
\draw[below] (-0.3,0.7) node {\scriptsize $-\mathbf{v}$};  
\path (0,0) pic[scale=0.5,color=red] {receiverR};   
\path (4,0.3) pic[scale=0.4,color=blue] {antenna}; 
\path[draw,->,decorate,decoration={snake,pre length=3pt, post length=7pt}](0.9,1.4) -- (3.9,1.4) ; 
\draw (8.25,0) node {\scriptsize $\omega,\mathbf{k}$};
\draw (11.35,0) node {\scriptsize $\omega^\prime,\mathbf{k}^\prime$};
\draw[->] (11.5,0.7)  -- (12.5,0.7);
\draw[below] (12,0.7) node {\scriptsize $\mathbf{v}$};  
\path (7,0) pic[scale=0.5,color=blue] {receiverR};   
\path (11,0.3) pic[scale=0.4,color=red] {antenna};
\path[draw,->,decorate,decoration={snake,pre length=3pt, post length=7pt}](7.9,1.4) -- (10.9,1.4) ; 
\draw (2.5,-0.25) node {\bf a};
\draw (9.5,-0.25) node {\bf b};
\end{tikzpicture}

\caption{Field of the receiver (a) as seen by the receiver (R-by-R-field) and
(b) as seen by the source (R-by-S-field). Source to the left of each
image. In blue: the device at rest with the observer.\label{fig:Field descriptions}}
\end{figure}
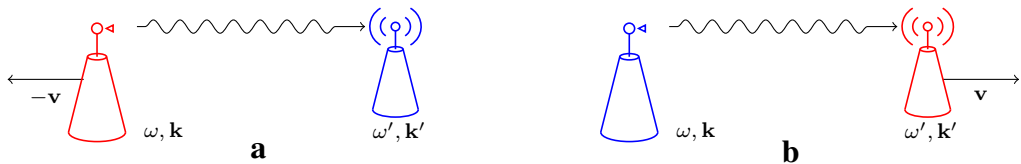

The perceived fact that electromagnetic disturbances require some
time to propagate between source and receiver is acknowledged by all
existing theoretical frameworks of EM. To describe this fact, the
concept of \emph{delayed action at a distance} was advanced in an
organised form by the Danish scientist Ludwig Lorenz \citep{lore67}
after preliminary attempts \citep{bett67,riem67,neum68} from the
Göttingen school originated by ideas of Gauss \citep[bd.5 p. 627-629,][]{gaus70}. 

Returning to relative motion, it must be noticed that even in the
case where source and receiver are in constant  relative motion, the
transformation between the S-by-S-field and S-by-R-field will not
be an inertial transformation (i.e., a Galilean coordinate change).
Galilean transformations correspond to descriptive transformations
that are not concerned with the observable relative motion of the
bodies. The relative motion of source and receiver is a measurable
part of the physics involved and not an arbitrariness (it is there
independently of the observer). Consider the following experiment:
a source is producing a signal sharply peaked around a given frequency,
$\omega_{0}$ as perceived by a receiver not moving with respect to
the source. A set of several, identically built and calibrated receivers
are put in motion at various velocities, $v_{i}$, with respect to
the source, see Figure\ref{fig:Senders-and-receivers}. How is the
signal perceived by each receiver? Which is the perceived characteristic
frequency $\omega_{i}$ ? Which is the relation between the signals
registered by the various receivers?

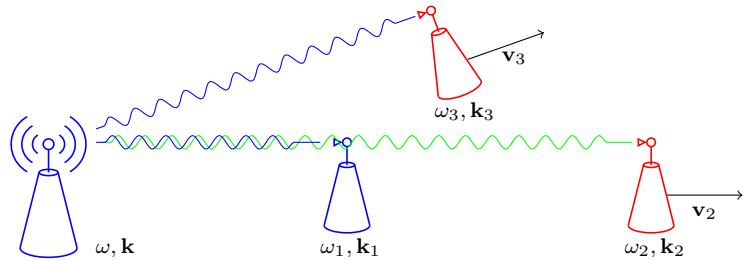
\begin{figure}[h]
\begin{center}
\begin{tikzpicture} 
\draw (1.25,0) node {\scriptsize $\omega,\mathbf{k}$};
\draw (4.35,0) node {\scriptsize $\omega_1,\mathbf{k}_1$};
\draw (8.35,0) node {\scriptsize $\omega_2,\mathbf{k}_2$};
\draw[->] (8.5,0.7)  -- (9.5,0.7);
\draw[below] (9,0.7) node {\scriptsize $\mathbf{v}_2$};
\draw (5.85,1.8) node {\scriptsize $\omega_3,\mathbf{k}_3$};
\draw[->] (5.89,2.5)  -- (6.89,2.855555);
\draw[below] (6.5,2.75) node {\scriptsize $\mathbf{v}_3$};   
\path (0,0) pic[scale=0.5,color=blue] {antenna};   
\path (4,0.3) pic[scale=0.4,color=blue] {receiver};
\path (8,0.3) pic[scale=0.4,color=red] {receiver}; 
\path (5.5,2.01) pic[scale=0.4,color=red,rotate=20] {receiver}; 
\path[draw,green,decorate,decoration={snake,pre length=3pt, post length=7pt}] (1.075,1.4) -> (8.05,1.4) ; 
\path[draw,blue,decorate,decoration={snake,pre length=3pt, post length=7pt}] (1.,1.4) -> (3.95,1.4) ;
\path[draw,blue,decorate,decoration={snake,pre length=3pt, post length=7pt}] (1,1.577778) -> (5.2,3.07111) ; 
\end{tikzpicture}
\end{center}

\caption{Sources and receivers. Blue receiver at rest relative to source, red
receivers in relative motion with respect to the source.\label{fig:Senders-and-receivers}}
\end{figure}

\section{Relational Electrodynamic Background\label{sec:Relational-Electrodynamic-Backgr}}

\subsection{Interaction-based relational formulation.\label{subsec:relational}}

In the presence of electromagnetic interactions, the observable effects
of the interaction can be interpreted as the result of the action
of the \emph{Lorentz force }\citep{lorentz1892CorpsMouvants,natiello2021relational}
over the electrified particles that constitute matter. 

The origins of the Lorentz force can be traced back to Maxwell and
what he called the \emph{Electromotive intensity} \citep[{[598],}][]{maxw73}.
Similarly, Lorentz referred to Maxwell's electrokinetic and potential
energies \citealp[{[630,631] and [634,635],}][]{maxw73}, combining
them in an action integral and the principle of least action. These
presentations take support in Maxwell equations,

\begin{eqnarray}
B & = & \nabla\times A\label{eq:magneticfield}\\
E & = & -\frac{\partial A}{\partial t}-\nabla V\label{eq:electricfield}\\
\epsilon_{0}\nabla\cdot E & = & \rho\label{eq:charges}\\
\mu_{0}j+\frac{1}{C^{2}}\frac{\partial E}{\partial t} & = & \nabla\times B\label{eq:displacement}
\end{eqnarray}
although their derivations some way or the other involved the ether
in the argumentation: Maxwell when considering the ``total current''
of eq.(\ref{eq:displacement}) and Lorentz in the variational principle.

Ludwig Lorenz avoided to introduce the ether by acknowledging that
light was a form of EM interaction and it corresponded with a transversal
wave \citep{lore61,lore1863}, later introducing retarded electromagnetic
potentials \citep{lore67} inspired in Franz Neumann\citep{neum1846-induction}\footnote{Maxwell's results have the same starting point in Neumann's work \citep[{[542],}][]{maxw73}},
\begin{equation}
(A,\frac{V}{C})(x,t)=\frac{\mu_{0}}{4\pi}\int\left(\frac{(j,\rho C)(y,t-\frac{1}{C}|x-y|)}{|x-y|}\right)d^{3}y,\label{eq:Lorenz}
\end{equation}
as an expression based upon these observations, and also on Neumann's
results and Kirchhoff results regarding EM waves in conductors which
make ample use of the continuity equation, ${\displaystyle \frac{\partial\rho}{\partial t}+\nabla\cdot j=0}$.
The displacement equation \eqref{eq:displacement} can be derived
from Equation \eqref{eq:Lorenz} and the continuity equation. It is
everywhere assumed that the current-charge vanishes rapidly enough
at infinity (so that the partial integrations usually present in EM
theory can actually be performed).

In terms of differential equations, Eq. \eqref{eq:magneticfield}
and \eqref{eq:electricfield} are definitions of the magnetic and
electric fields and the main constitutive equation reads
\begin{equation}
\square(A,\frac{1}{C}V)=-\mu_{0}(j,C\rho).\label{eq:diff-wave}
\end{equation}
where ${\displaystyle \square=\Delta-\frac{1}{C^{2}}\frac{\partial^{2}}{\partial t^{2}}}$
is the D'Alembert operator. This equation is satisfied also by:
\begin{equation}
(\widetilde{A},\frac{\widetilde{V}}{C})(x,t)=\frac{\mu_{0}}{4\pi}\int\left(\frac{(j,\rho C)(y,t+\frac{1}{C}|x-y|)}{|x-y|}\right)d^{3}y.\label{eq:Lorenz+}
\end{equation}

The potentials $A,V$ describe the relation between current-density
$j$ or a charge-density $\rho$ with their electromagnetic effect.
The standard interpretation is that $(j,C\rho)$ are the source (the
primary circuit) of the EM action while the potentials are intermediate
fields that indicate their action over the secondary circuit, corresponding
to delayed action; this is, $(A,{\displaystyle \frac{V}{C}})$ are
source fields, S-fields. A different association is possible for $(j,C\rho)$;
they can be interpreted as those corresponding to the secondary circuit
and in such case $(\widetilde{A},{\displaystyle \frac{\widetilde{V}}{C}})$
are the R-fields that sense an EM perturbation away from the receiver
and express its effect later in it, this is, they are advanced fields.

When relevant, we use the indices $1$$(2)$ for the source (receiver).
It is possible to perform a derivation of the Lorentz force \citep{natiello2021relational}
from the Principle of Least Action supported in Maxwell's energy considerations
following Lorentz but using mathematical deduction at the few situations
where Lorentz used arguments corresponding to the ether in \citep{lorentz1892CorpsMouvants}.
Let $\bar{x}(t)$ denote the distance between a reference point in
the source and a reference point in the receiver. We will consider
situations where source and receiver move as rigid bodies in relative
motion (but not in relative rotation) as Lorentz did.

In what follows, $z$ denotes a ``local'' coordinate on body $2$.
We consider, following Lorentz, a collection of virtual displacements
parametrised by time $\delta\bar{x}(t)$ \footnote{As in the Lagrangian formulation, the collection of virtual displacements
is differentiable, i.e., $\dot{\bar{x}}$ exists, and the variation
is zero in the time extremes. Virtual displacements are not the same
as time-dependent perturbations of the position, for the latter have
other effects apart from the change of relative distances. Virtual
displacements are closer to changes of initial conditions than to
perturbations. In particular, during a virtual displacement, there
is no wave progression.}. The variation of charge and current densities \textcolor{blue}{$\rho_{2}(z,t),j_{2}(z,t)$
}on the receiver can be expressed in the coordinates of eq.(\ref{eq:Lorenz})
as:
\begin{eqnarray}
\delta\rho_{2}(x,t) & = & \left(-\delta\bar{x}(t)\cdot\nabla\right)\rho_{2}(x,t)\nonumber \\
\delta j_{2}(x,t) & = & \left(-\delta\bar{x}(t)\cdot\nabla\right)j_{2}(x,t)+\delta\dot{\bar{x}}\rho_{2}(x,t)\label{eq:GalielanVariation}
\end{eqnarray}

The latter relates the local expression of charge and current densities
in the secondary circuit and the same physical object in terms of
the coordinates associated to the primary circuit.

Maxwell considers the electrokinetic and potential energies, which
Lorentz further combines in the action integral
\begin{eqnarray}
\mathcal{A} & = & \frac{1}{2}\int dt\,\int\left(A_{1}(x,t)\cdot j_{2}(x,t)-\rho_{2}(x,t)\,V_{1}(x,t)\right)\,d^{3}x\label{eq:action}
\end{eqnarray}
that here represents the interaction energy between a source or primary
circuit labelled $1$ and a receiver or secondary circuit labelled
$2$. The relation \ref{eq:diff-wave} is satisfied for fields and
current-charge corresponding to the same index. The action integral
in the present form corresponds to an S-by-S-field representation,
namely that the fields of the source are evaluated at the position
of the receiver in the coordinates $x$ of the source and time $t$.

We state the result as a theorem:
\begin{thm}
\label{thm:TheForce} (\citep{natiello2021relational}) Assuming that
all of $|B|^{2}$,$|E|^{2}$, $A$, $j$, $V$, $\rho$ decrease faster
than $\frac{1}{r^{2}}$ at infinity, assuming the action is given
by eq. \ref{eq:action} and given the validity of the continuity equation
${\displaystyle \frac{\partial\rho}{\partial t}+\nabla\cdot j=0}$,
the electromagnetic force

\textup{
\[
F_{em}=\int\!d^{3}x\left[j_{2}(x,t)\times B_{1}(x,t)+\rho_{2}(x,t)\,E_{1}(x,t)\right]
\]
}on the probe can be deduced from Hamilton's principle of minimal
action (\textup{$\delta_{\bar{x}(t)}{\cal A}=0$}) using a virtual
displacement $\delta_{\bar{x}}$ of the probe (which we indicate with
subindex 2), eq.(\ref{eq:GalielanVariation}) with respect to the
primary circuit producing the fields (subindex 1).
\end{thm}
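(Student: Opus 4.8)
The plan is to compute the variation $\delta\mathcal{A}$ induced by the virtual displacement $\delta\bar{x}(t)$ and to read off the electromagnetic force as the coefficient of $\delta\bar{x}$ in the resulting expression, Hamilton's principle then identifying $F_{em}$ as the generalized force conjugate to the relative coordinate. Since the source is held fixed while only the probe is displaced, the fields $A_1,V_1$ do not vary and only $\rho_2,j_2$ change according to eq.(\ref{eq:GalielanVariation}). First I would substitute those expressions into
\[
\delta\mathcal{A}=\frac{1}{2}\int dt\int\left(A_1\cdot\delta j_2-\delta\rho_2\,V_1\right)d^3x,
\]
which produces three kinds of terms: the convective pieces built from $-(\delta\bar{x}\cdot\nabla)j_2$ and $-(\delta\bar{x}\cdot\nabla)\rho_2$, and the velocity piece $\delta\dot{\bar{x}}\,\rho_2$ coming from the current.

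Next I would integrate by parts in space, transferring the gradients off $j_2$ and $\rho_2$ onto $A_1$ and $V_1$. Here the hypothesis that $A,V,j,\rho$ decay faster than $\frac{1}{r^2}$ is exactly what kills the surface terms at infinity. This turns the two convective contributions into $((\delta\bar{x}\cdot\nabla)A_1)\cdot j_2$ and $-\rho_2(\delta\bar{x}\cdot\nabla)V_1$. The velocity term is treated separately by integrating by parts in time: since $\delta\bar{x}$ vanishes at the temporal endpoints, $\int dt\int\rho_2\,(A_1\cdot\delta\dot{\bar{x}})\,d^3x$ becomes $-\int dt\,\delta\bar{x}\cdot\frac{d}{dt}\int\rho_2 A_1\,d^3x$.

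The decisive step is evaluating that total time derivative. Because the spatial integral runs over all of space, $\frac{d}{dt}\int\rho_2 A_1\,d^3x=\int(\partial_t\rho_2\,A_1+\rho_2\,\partial_t A_1)\,d^3x$, and now I would invoke the continuity equation $\partial_t\rho_2=-\nabla\cdot j_2$ to rewrite the first term, a further spatial integration by parts converting $-\int(\nabla\cdot j_2)A_1$ into $\int(j_2\cdot\nabla)A_1$. At this point the terms acting on $A_1$ carry the index structure $j_{2,i}(\partial_k A_{1,i}-\partial_i A_{1,k})$, which is precisely the $k$-th component of $j_2\times(\nabla\times A_1)$; recognising $B_1=\nabla\times A_1$ (eq.(\ref{eq:magneticfield})) collapses it to $(j_2\times B_1)_k$, while the surviving charge terms assemble into $\rho_2(-\partial_t A_1-\nabla V_1)=\rho_2 E_1$ by eq.(\ref{eq:electricfield}).

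Collecting everything yields $\delta\mathcal{A}=\int dt\,\delta\bar{x}(t)\cdot\int d^3x\,[j_2\times B_1+\rho_2 E_1]$, up to the overall constant fixed by the normalisation of $\mathcal{A}$, so that stationarity of the action identifies the bracketed integral as $F_{em}$. I expect the main obstacle to be the bookkeeping of the velocity term: one must correctly route $\delta\dot{\bar{x}}\,\rho_2$ through the temporal integration by parts and then through the continuity equation, so that the resulting $(j_2\cdot\nabla)A_1$ combines with the convective $((\delta\bar{x}\cdot\nabla)A_1)\cdot j_2$ into the curl (magnetic) part, while the leftover $\partial_t A_1$ supplies exactly the induction part of $E_1$. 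Keeping the signs and the antisymmetrisation straight is where the derivation is easiest to get wrong.
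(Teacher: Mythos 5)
Your derivation is, in structure, exactly the proof that the paper relies on: the paper does not prove Theorem \ref{thm:TheForce} itself but defers to the cited reference \citep{natiello2021relational}, and the argument there is the one you outline --- insert the Galilean virtual displacement eq.(\ref{eq:GalielanVariation}) into the variation of eq.(\ref{eq:action}), integrate by parts in space (with the decay hypothesis killing the boundary terms), integrate the $\delta\dot{\bar{x}}\,\rho_{2}$ piece by parts in time, convert $\partial_{t}\rho_{2}$ via the continuity equation, and reassemble the index structure $j_{2,i}(\partial_{k}A_{1,i}-\partial_{i}A_{1,k})$ into $j_{2}\times B_{1}$, with the leftover terms forming $\rho_{2}E_{1}$. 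All of your signs, integrations by parts and the curl identification are correct.

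The one place where your proof does not close is the point you wave away as ``the overall constant fixed by the normalisation of $\mathcal{A}$''. That constant is not free: eq.(\ref{eq:action}) carries an explicit prefactor $\frac{1}{2}$, so the computation you describe yields
\[
\delta\mathcal{A}=\frac{1}{2}\int dt\,\delta\bar{x}(t)\cdot\int d^{3}x\left[j_{2}\times B_{1}+\rho_{2}E_{1}\right],
\]
i.e.\ the generalized force conjugate to $\bar{x}$ comes out as $\frac{1}{2}F_{em}$, off by a factor of two from the theorem's claim. The discrepancy is inherited from the paper's convention rather than from your manipulations: the factor $\frac{1}{2}$ belongs to Maxwell's electrokinetic and potential energies written for \emph{total} currents and fields, where the $1$--$2$ cross terms occur twice ($A_{1}\cdot j_{2}$ and $A_{2}\cdot j_{1}$); once only one cross term is kept, as eq.(\ref{eq:action}) does, the interaction action should carry no $\frac{1}{2}$. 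Note also that restoring the second cross term is not an innocent fix: by the source/receiver symmetry of Subsection \ref{subsec:Source/receiver-symmetry}, $\int A_{2}\cdot j_{1}$ equals $\int \widetilde{A}_{1}\cdot j_{2}$ with the \emph{advanced} potential of the source, so that route produces time-symmetric (half retarded, half advanced) fields rather than the purely retarded $B_{1},E_{1}$ appearing in $F_{em}$. A complete proof must therefore either start from the action without the $\frac{1}{2}$, or account for the factor explicitly; declaring it a matter of normalisation leaves the statement, as written, unproved.
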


\subsection{Wave equation for the potentials\label{subsec:Wave-equation}}

The wave equation for the potentials can be deduced from Equations
(\ref{eq:magneticfield}-\ref{eq:displacement}).
\begin{lem}
\label{lem:wave}${\displaystyle A(x,t)=\frac{\mu_{0}}{4\pi}\int_{U}\left(\frac{j(y,t-\frac{1}{C}|x-y|)}{|x-y|}\right)\,d^{3}y\Rightarrow\Box A=-\mu_{0}j}$,
and similarly for $\epsilon_{0}\Box V=-\rho$, where $\Box\equiv{\displaystyle \Delta-\frac{1}{C^{2}}\frac{\partial^{2}}{\partial t^{2}}}$.
\end{lem}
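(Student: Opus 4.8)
The plan is to recognise the stated integral as the \emph{retarded potential} and to verify the inhomogeneous wave equation by direct differentiation under the integral sign, the whole argument resting on the distributional identity $\Delta\frac{1}{|x-y|}=-4\pi\delta(x-y)$. The key observation that keeps the bookkeeping clean is that the kernel $j(y,u)/r$ is annihilated by $\Box$ everywhere except at $y=x$, so that the entire source term is produced by the singularity of $1/r$ at the coincidence point.

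First I would set $r=|x-y|$, $\hat r=(x-y)/r$, and introduce the retarded time $u=t-r/C$, noting that $x$ enters the integrand both explicitly through $1/r$ and implicitly through $u$. Writing dots for differentiation in the second slot of $j$, the chain rule gives $\partial_t u=1$, $\nabla_x u=-\hat r/C$, $\Delta_x r=2/r$, hence $\nabla_x j(y,u)=-\dot j\,\hat r/C$ and $\Delta_x j(y,u)=-2\dot j/(Cr)+\ddot j/C^2$. Applying $\Delta(fg)=(\Delta f)g+2\nabla f\cdot\nabla g+f\Delta g$ with $f=1/r$, $g=j(y,u)$, the cross term $2\nabla f\cdot\nabla g=2\dot j/(Cr^2)$ cancels exactly the $-2\dot j/(Cr)$ piece coming from $f\Delta g$, so that for $r\neq0$
\[
\Box_x\!\left[\frac{j(y,u)}{r}\right]=\frac{\ddot j}{C^2 r}-\frac{1}{C^2}\frac{\ddot j}{r}=0 .
\]
Thus off the singularity the kernel solves the homogeneous wave equation, and the only surviving contribution upon integrating comes from $\Delta(1/r)=-4\pi\delta(x-y)$, which (using $u\to t$ as $y\to x$) yields $-4\pi j(x,t)$. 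Multiplying by $\mu_0/4\pi$ gives $\Box A=-\mu_0 j$. The scalar claim $\epsilon_0\Box V=-\rho$ then follows from the identical computation applied to the representation \eqref{eq:Lorenz}, using $\mu_0 C^2=1/\epsilon_0$.

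The step I expect to be the main obstacle is the rigorous justification of the singularity handling: both the interchange of $\Delta_x$ with the integral and the identity $\Delta(1/r)=-4\pi\delta$ are valid only in the distributional sense. I would make this precise by excising a small ball $B_\varepsilon(x)$, differentiating under the integral on the regular region $r>\varepsilon$ (where the cancellation above shows the integrand of $\Box A$ vanishes), converting the contribution near the singularity into a surface integral over $\partial B_\varepsilon$ via Green's identities, and taking $\varepsilon\to0$ to extract the $-4\pi j(x,t)$ term. The $1/r^2$-type decay hypotheses stated in the lemma guarantee that the companion surface terms at infinity vanish, so that no spurious boundary contribution remains.
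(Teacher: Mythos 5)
Your proposal is correct and follows essentially the same route as the paper's own proof: differentiation under the integral, the product/chain rule computation in which the $O(1/r^{2})$ cross term cancels the term arising from the retarded-time dependence, and the distributional identity $\Delta\frac{1}{r}=-4\pi\delta(x-y)$ producing the source term $-\mu_{0}j(x,t)$ after the $\ddot{\jmath}/(C^{2}r)$ piece is absorbed by $-\frac{1}{C^{2}}\partial_{t}^{2}A$. The only difference is cosmetic: the paper states the cancellation via the identity $2\nabla\frac{1}{r}\cdot\nabla\frac{r}{C}+\frac{1}{r}\Delta\frac{r}{C}=0$ and applies the delta identity directly, whereas you additionally sketch the excision argument near $r=0$ that would make that final step rigorous.
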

For a proof, see Appendix \ref{lem:wave}. Note that this result describes
a property of eq.(\ref{eq:Lorenz}), independently of whether $A,V,j,\rho$
are the electromagnetic vector potential and current, etc., or not.
We prove now that the result holds for the electromagnetic $A,V,j,\rho$,
via a variation of the electromagnetic action \ref{eq:action}:
\begin{thm}
\label{thm:wave}Let $(\mathsf{A},\mathsf{V})$ be the known values
of the electromagnetic potentials in a piece of matter supported on
a region of space with characteristic function $\chi$. Then, assuming
that all of $|B|^{2},|E|^{2},A,j,V,\rho$ decrease faster than $\frac{1}{r^{2}}$
at infinity, Hamilton's principle of least action \citep[Ch 3, 13 A p. 59, ][]{arno89},
$\delta{\cal A}=0$, subject to the constraints given by $(\mathsf{A},\mathsf{V})$
implies the relations
\begin{eqnarray*}
\frac{1}{\mu_{0}}\nabla\times B-\epsilon_{0}\frac{\partial E}{\partial t} & = & \mu_{0}j\\
\epsilon_{0}\nabla\cdot E & = & -\frac{\rho}{\epsilon_{0}}.\\
\nabla\cdot j+\frac{\partial\rho}{\partial t} & = & 0
\end{eqnarray*}
\end{thm}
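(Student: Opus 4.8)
The plan is to obtain the two inhomogeneous field relations as the Euler--Lagrange equations of Hamilton's principle applied to the action $\mathcal{A}$ of eq.~\eqref{eq:action}, taking the potentials $A$ and $V$ as the independent fields and the current/charge $(j,\rho)$ carried by the constrained matter as the sources. The integrand of $\mathcal{A}$ is Maxwell's electrokinetic energy minus his potential energy, $\tfrac12\int A\cdot j\,d^3x-\tfrac12\int\rho V\,d^3x$. My first move is to rewrite these two energies in field-strength form. Using the definitions $B=\nabla\times A$ and $E=-\partial_t A-\nabla V$ of eqs.~\eqref{eq:magneticfield}--\eqref{eq:electricfield} together with the constitutive wave relation \eqref{eq:diff-wave} (equivalently the retarded representation \eqref{eq:Lorenz} and Lemma~\ref{lem:wave}), integration by parts turns $\tfrac1{2}\int A\cdot j$ into $\tfrac1{2\mu_0}\int|B|^2$ and $\tfrac12\int\rho V$ into $\tfrac{\epsilon_0}{2}\int|E|^2$, up to boundary contributions and a total time derivative of the form $\tfrac{d}{dt}\int A\cdot E\,d^3x$. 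Here the stated decay hypotheses ($|B|^2,|E|^2,A,V,j,\rho$ faster than $r^{-2}$) kill every spatial boundary term at infinity; the constraint that the potentials equal the known values $(\mathsf{A},\mathsf{V})$ on $\operatorname{supp}\chi$ forces $\delta A=\delta V=0$ on the matter, removing the remaining boundary terms; and the total time derivative integrates to endpoint terms that vanish under the usual conditions of Hamilton's principle.

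With $\mathcal{A}$ written in field-energy form I would then compute $\delta\mathcal{A}=0$ by independent variations of the two potentials. Varying $A$, the magnetic term gives $\int B\cdot(\nabla\times\delta A)=\int(\nabla\times B)\cdot\delta A$ after one spatial integration by parts, while the electric term yields $-\epsilon_0\int E\cdot\partial_t\delta A=+\epsilon_0\int(\partial_t E)\cdot\delta A$ after integrating by parts in time; collecting the coefficient of $\delta A$ together with the source contribution from the interaction term reproduces the first (Ampère-type) relation. Varying $V$, only the electric term responds, through $-\epsilon_0\int E\cdot\nabla\delta V=\epsilon_0\int(\nabla\cdot E)\,\delta V$, and arbitrariness of $\delta V$ away from the fixed matter gives the second (Gauss-type) relation. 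In both cases the vanishing of the variation on $\operatorname{supp}\chi$ and at spatial infinity is exactly what licenses discarding the integrated terms.

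Finally, I would obtain the continuity equation not by a further variation but as a compatibility condition of the two relations just derived: taking the divergence of the Ampère relation and invoking $\nabla\cdot(\nabla\times B)=0$ leaves a balance between $\partial_t(\nabla\cdot E)$ and $\nabla\cdot j$, into which the Gauss relation is substituted to produce $\partial_t\rho+\nabla\cdot j=0$. This closes the list of three asserted relations.

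The step I expect to be the main obstacle is the very first one: recasting the interaction action into the field-energy functional \emph{without} silently assuming the field equations one is trying to prove. The electrokinetic/potential identities $\tfrac12\int A\cdot j=\tfrac1{2\mu_0}\int|B|^2$ and $\tfrac12\int\rho V=\tfrac{\epsilon_0}{2}\int|E|^2$ hold on shell, so the delicate point is to carry out the rewriting using only the constitutive relation \eqref{eq:diff-wave}/Lemma~\ref{lem:wave} and the field definitions, and to keep $A,V$ (rather than the sources) as the variational fields, so that the matter sources reappear correctly on the right-hand sides instead of collapsing to the source-free equations. I would therefore spend most care verifying that the decay rates in the hypotheses are precisely strong enough to annihilate each boundary term generated, and that the constrained region $\operatorname{supp}\chi$ carries the sources while the free variation outside it enforces the field equations.
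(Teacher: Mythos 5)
Your two closing steps --- the independent variations of $A$ and $V$, and obtaining continuity by taking the divergence of the Amp\`ere-type relation and inserting the Gauss-type one --- coincide with the paper's proof. But the route you take to reach that point has a genuine gap, and it is exactly the one you flag without resolving. The rewriting of the interaction action \eqref{eq:action} into the field-energy functional cannot be carried out ``using only the constitutive relation and the field definitions'': a direct computation with $\Box A=-\mu_{0}j$, $\epsilon_{0}\Box V=-\rho$ and integration by parts shows that the two functionals differ (modulo boundary and endpoint terms) by
\[
\frac{1}{\mu_{0}}\int dt\int\Bigl(\nabla\cdot A+\frac{1}{C^{2}}\frac{\partial V}{\partial t}\Bigr)^{2}d^{3}x ,
\]
and for the potentials \eqref{eq:Lorenz} the vanishing of the Lorenz-gauge combination is equivalent to the continuity equation --- one of the three conclusions you are trying to prove. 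Moreover, even if the identity were granted, substituting an action by an on-shell-equal functional is not a legitimate variational move: it changes the Euler--Lagrange equations. The paper avoids this entirely by never performing the rewriting; it simply \emph{postulates} as starting point the constrained field action $\frac{1}{2}\int dt\int\bigl(\frac{1}{\mu_{0}}|B|^{2}-\epsilon_{0}|E|^{2}-\kappa\cdot(A-\mathsf{A})\chi+\lambda(V-\mathsf{V})\chi\bigr)d^{3}x$, with $B=\nabla\times A$ and $E=-\partial_{t}A-\nabla V$ as shorthands.

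The second, and decisive, divergence is your treatment of the constraint. Forcing $\delta A=\delta V=0$ on $\operatorname{supp}\chi$ yields Euler--Lagrange equations only \emph{outside} the matter, where any pre-given matter sources vanish, so you obtain precisely the source-free equations you were worried about collapsing to, and no equation at all inside the matter; your phrase ``the source contribution from the interaction term'' has nothing to refer to once the interaction term has been rewritten away. The paper instead adjoins the constraints with Lagrange multipliers $\kappa$ (vector) and $\lambda$ (scalar), varies $A$ and $V$ \emph{without restriction}, obtains $\frac{1}{\mu_{0}}\nabla\times B-\epsilon_{0}\partial_{t}E=\chi\kappa$ and $\epsilon_{0}\nabla\cdot E=\chi\lambda$, and then \emph{identifies} $j\equiv\chi\kappa$ and $\rho\equiv\chi\lambda$: current and charge are not inputs carried by the matter but outputs of the variational principle, defined as the multipliers that enforce the known potentials $(\mathsf{A},\mathsf{V})$. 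This inversion is the heart of the theorem (and also why no factor or circularity problems arise: arbitrary constants are absorbed into the multipliers). Without it, or an equivalent device such as defining $j,\rho$ as the residuals of the Euler--Lagrange operator supported on $\operatorname{supp}\chi$, a derivation along your lines cannot place the sources on the right-hand sides.
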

\begin{coro}
In the special case when the relation $\nabla\cdot A+\frac{1}{C^{2}}\frac{\partial V}{\partial t}=0$
(the ``Lorenz gauge'') is satisfied, the manifestation of the potentials
outside matter obeys the wave equation, eq.(\ref{eq:diff-wave}).
\end{coro}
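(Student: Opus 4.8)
The plan is to obtain the wave equation purely algebraically, by substituting the field definitions \eqref{eq:magneticfield} and \eqref{eq:electricfield} into the source equations \eqref{eq:charges} and \eqref{eq:displacement} (these being the content already supplied by Theorem \ref{thm:wave}) and then using the gauge condition to cancel the one remaining coupling term. No further variational machinery is needed; the corollary is a consequence of the relations the theorem has established.

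First I would treat the vector potential. Inserting $B=\nabla\times A$ into \eqref{eq:displacement}, applying the identity $\nabla\times(\nabla\times A)=\nabla(\nabla\cdot A)-\Delta A$, and using $\partial_{t}E=-\partial_{t}^{2}A-\nabla\partial_{t}V$ from \eqref{eq:electricfield}, I would collect terms so that the D'Alembertian is isolated:
\[
\mu_{0}j=-\Box A+\nabla\Big(\nabla\cdot A+\tfrac{1}{C^{2}}\tfrac{\partial V}{\partial t}\Big).
\]
The parenthesised expression is exactly the Lorenz-gauge quantity, so under the hypothesis it vanishes and one is left with $\Box A=-\mu_{0}j$.

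Next I would treat the scalar potential. Substituting \eqref{eq:electricfield} into \eqref{eq:charges} gives $\epsilon_{0}\bigl(-\partial_{t}(\nabla\cdot A)-\Delta V\bigr)=\rho$. Using the gauge condition to replace $\nabla\cdot A$ by $-\tfrac{1}{C^{2}}\partial_{t}V$ turns the mixed term into $\tfrac{1}{C^{2}}\partial_{t}^{2}V$, and the equation collapses to $\Box V=-\rho/\epsilon_{0}$. Finally I would repackage the two results into the four-component form of \eqref{eq:diff-wave}: the first slot is $\Box A=-\mu_{0}j$ directly, and for the second slot I would divide $\Box V=-\rho/\epsilon_{0}$ by $C$ and use $\mu_{0}\epsilon_{0}=1/C^{2}$ (equivalently $1/(C\epsilon_{0})=\mu_{0}C$) to obtain $\Box(V/C)=-\mu_{0}C\rho$, matching the right-hand side $-\mu_{0}(j,C\rho)$.

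There is no genuine obstacle here: the computation is routine vector calculus, and the only point demanding care is the bookkeeping of constants required to cast the scalar equation in the symmetric normalisation $(A,V/C)$ against $(j,C\rho)$, where the relation $\mu_{0}\epsilon_{0}=1/C^{2}$ must be invoked to reconcile $1/\epsilon_{0}$ with $\mu_{0}C^{2}$. The qualifier ``outside matter'' simply records that \eqref{eq:magneticfield}--\eqref{eq:displacement} are being applied away from the region where the potentials were constrained to the prescribed values $(\mathsf{A},\mathsf{V})$ in Theorem \ref{thm:wave}; wherever those equations together with the gauge condition hold, the wave equation \eqref{eq:diff-wave} follows.
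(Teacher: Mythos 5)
Your proposal is correct and follows essentially the same route as the paper: the paper's appendix likewise takes the two potential-form equations established by Theorem \ref{thm:wave}, applies the identity $\nabla\times(\nabla\times A)=\nabla(\nabla\cdot A)-\Delta A$, and uses the Lorenz gauge condition to cancel the coupling terms, yielding $\Box A=-\mu_{0}j$ and $\epsilon_{0}\Box V=-\rho$. The only difference is bookkeeping: the paper substitutes the gauge-derived relations $\nabla\partial_{t}V=-C^{2}\nabla(\nabla\cdot A)$ and $\nabla\cdot\partial_{t}A=-\tfrac{1}{C^{2}}\partial_{t}^{2}V$ directly, while you isolate the gauge quantity as a vanishing gradient term, which is the same computation.
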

We develop the proof in Appendix \ref{thm:wave}.

The theorem deserves to be named Lorenz-Lorentz theorem since in Lorenz
conception light was associated to the EM activity inside matter \citep{lore67}
and Lorentz proposed the expression for the action based on Maxwell's
energy considerations.

Recasting the potentials of eq.(\ref{eq:Lorenz}) as the convolution
of charge and currents with the \emph{Lorenz kernel }hereby defined:

\[
K(x-y,s-r)=\frac{1}{|y-x|}\delta(s-r-\frac{1}{C}|y-x|),
\]
namely 
\begin{equation}
(A_{1},\frac{V_{1}}{C})(x,s)=\frac{\mu_{0}}{4\pi}\int\left[\int_{-\infty}^{s}K(x-y,s-r)(j_{1},C\rho_{1})(y,r)\right]\,d^{3}y\,dr\label{eq:kernel}
\end{equation}
a fundamental symmetry between potentials and wave operators is expressed
in the following
\begin{lem}
\label{Lemma:K-wave-inv}The action of the kernel $K(x-y,s-r)$ and
the differential operator $\Box$ are reciprocally inverse of each
other.
\end{lem}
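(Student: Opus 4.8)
The plan is to read the statement operationally: the Lorenz kernel defines the causal convolution operator
\[
(\mathcal{K}f)(x,s)=\int\left[\int_{-\infty}^{s}K(x-y,s-r)\,f(y,r)\right]d^{3}y\,dr,
\]
so that eq.(\ref{eq:kernel}) becomes $(A_{1},\tfrac{V_{1}}{C})=\tfrac{\mu_{0}}{4\pi}\,\mathcal{K}(j_{1},C\rho_{1})$. The phrase \emph{reciprocally inverse} then means that, up to the normalising constant $-4\pi$ already implicit in Lemma \ref{lem:wave} and in the prefactor of eq.(\ref{eq:kernel}), both $\Box\,\mathcal{K}=-4\pi\,\mathrm{Id}$ and $\mathcal{K}\,\Box=-4\pi\,\mathrm{Id}$ hold; equivalently, $-\tfrac{1}{4\pi}K$ is the retarded fundamental solution of the D'Alembertian, $\Box\bigl(-\tfrac{1}{4\pi}K\bigr)=\delta$ as a distribution on space-time.

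First I would dispose of one composition for free. Applying $\Box$ to the retarded potential $\mathcal{K}f$ and recovering $-4\pi f$ is exactly the content of Lemma \ref{lem:wave} (set $f=j$ and strip the $\mu_{0}$): one gets $\Box\,\mathcal{K}f=-4\pi f$, so $\mathcal{K}$ is a right inverse of $-\tfrac{1}{4\pi}\Box$. For the reverse composition I would use that $\Box$ has constant coefficients and therefore commutes with convolution: $\Box(K*f)=(\Box K)*f=K*(\Box f)$, whence $\mathcal{K}(\Box f)=\Box(\mathcal{K}f)=-4\pi f$ by the previous step. This collapses the entire lemma onto the single distributional identity $\Box K=-4\pi\,\delta$, which is just the differentiated form of Lemma \ref{lem:wave}. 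The causal cut-off $\int_{-\infty}^{s}$ is harmless here because $K$ is supported on the forward light cone $s-r=\tfrac{1}{C}|x-y|\ge 0$: the causal convolution coincides with the full one, the supports are convolvable thanks to the decay hypotheses, and the commutation theorem for constant-coefficient operators applies verbatim.

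The hard part will be justifying the formal interchanges through the singular kernel, i.e.\ proving $\Box K=-4\pi\delta$ honestly. I would differentiate under the integral away from the diagonal, excise a ball $|x-y|<\varepsilon$, and show that the surface contribution over $\partial B_{\varepsilon}$ --- produced by the $\tfrac{1}{|x-y|}$ factor acting together with the retarded-time delta --- tends to $-4\pi f(x,s)$ as $\varepsilon\to0$, while the remaining volume terms vanish because $K$ solves the homogeneous wave equation off the cone. The decay faster than $r^{-2}$ kills the boundary terms at spatial infinity and causal quiescence as $r\to-\infty$ kills the lower time-endpoint; the only genuinely delicate bookkeeping is the term at the upper limit $r=s$, where the time integration by parts meets the apex of the light cone, and I would control it by regularising the delta and passing to the limit. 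Once $\Box K=-4\pi\delta$ is secured, both inverse relations follow at once and the lemma is proved.
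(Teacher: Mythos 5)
Your proposal is correct and rests on the same foundation as the paper's own proof: both compositions are reduced to Lemma \ref{lem:wave}, i.e.\ to the convolution identity $\Box K=-4\pi\,\mathrm{Id}$. The difference lies in how the left-inverse property $K\ast\Box=-4\pi\,\mathrm{Id}$ is obtained. The paper never commutes $\Box$ through the convolution for a general argument: it applies $K\ast$ to both sides of $\Box A=-\mu_{0}j$ and uses the defining relation $A=\frac{\mu_{0}}{4\pi}K\ast j$ once more, getting $K\ast\Box A=-\mu_{0}K\ast j=-4\pi A$; the price is that the operator identity is established only on the range of the convolution operator (this is what the paper's phrase ``in their respective domain of definition'' is doing). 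You instead invoke the commutation of the constant-coefficient operator $\Box$ with convolution, $\Box(K\ast f)=K\ast(\Box f)$, which yields the identity for arbitrary admissible $f$ --- a stronger conclusion --- but it obliges you to justify moving $\Box$ through the singular kernel, which is exactly the ``delicate bookkeeping'' your last paragraph takes on. On that last paragraph: the excision/surface-term argument for $\Box K=-4\pi\delta$ is not new work. It is precisely the content of Lemma \ref{lem:wave}, whose proof the paper gives in the appendix (via $\Delta\frac{1}{r}=-4\pi\delta$ and the cancellation identities), and which you are entitled to cite rather than re-derive; once you do so, your argument collapses to the same two-line proof as the paper's, with the commutation step as the only genuine variation.
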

\begin{proof}
We discuss the proof using $A$ to fix ideas, and write eq.(\ref{eq:kernel})
in shorthand as ${\displaystyle A=\frac{\mu_{0}}{4\pi}K\ast j}$ (where
the star stands for convolution). Composition with $\Box$ gives:
\begin{eqnarray*}
\Box A & = & \frac{\mu_{0}}{4\pi}\Box K\ast j=-\mu_{0}j\\
K\ast\Box A & = & -\mu_{0}K\ast j=-4\pi A.
\end{eqnarray*}
Hence, in their respective domain of definition $\Box K=-4\pi Id$
(convolution identity) and $K\ast\Box=-4\pi Id$ (operator identity).
\end{proof}

\subsection{Source/receiver symmetry of the action\label{subsec:Source/receiver-symmetry}}

Since the action \eqref{eq:action} plays a fundamental role in this
relational presentation we should devote some lines to consider its
symmetries.

We first write the action in terms of definite integrals and the kernel
$K(x-y,s-r)$ 
\begin{equation}
\mathcal{A}=\frac{1}{2}\frac{\mu_{0}}{4\pi}\int_{t_{0}}^{t}ds\int_{t_{0}}^{t}dr\,\iint K(x-y,s-r)\left(j_{1}\cdot j_{2}-C^{2}\rho_{1}\rho_{2}\right)d^{3}x\,d^{3}y\label{eq:action-symm}
\end{equation}

The form of the action in eq.(\ref{eq:action-symm}) is almost symmetric
in terms of exchanging primary and secondary circuits. Interchanging
primary and secondary circuit, and $(x,s)\longleftrightarrow(y,r)$
the kernel changes into 

\begin{equation}
K(x-y,s-r)=\frac{1}{|y-x|}\delta(s-r+\frac{1}{C}|y-x|)\label{eq:KerAdv}
\end{equation}
Thus, the action considered is always the action of the primary circuit
over the secondary circuit which can be written in two forms. In one
of them, the S-field (the standard form), EM changes are propagated
with delay by the potentials (and their derivatives, the EM-fields)
at distances away from the source. The symmetry-related form, the
R-field, associates an advanced field with the receiver. In this form,
the field can be seen as a sensor that will carry disturbances to
the receiver that will display changes at a later time.

The symmetry of the action has the immediate consequence that all
lemmas and theorems of subsections \eqref{subsec:relational} and
\eqref{subsec:Wave-equation} have an equivalent form under this symmetry
operation. In particular, there is Lorentz-force where the S-fields,
R-currents and R-charges are exchanged by R-fields, S-currents and
S-charges. This relation is what corresponds to the action and reaction
law for actions that propagate instantaneously, since in the limit
$C\rightarrow\infty$ the S-field and the R-field of a given body/device
coincide.

\subsection{Detection/perception in relative motion \label{subsec:Detection}}

Let us consider the potentials $A,V$ originated in a source with
current-charge $J=(j,C\rho)$ measured at (rest relative to) the source
(with coordinate $y$). We consider further a detector extending over
a variable $x$ with reference to a distinguished point in it. In
the case of source and detector at relative rest, we write 
\begin{eqnarray}
(A,\frac{V}{C})(x,t) & = & \frac{\mu_{0}}{4\pi}\int d^{3}y\,\int ds\left(\frac{\delta((t-s)-\frac{1}{C}|x-y|)}{|x-y|}\right)\,J(y,s)\label{eq:rest}\\
 & = & \frac{\mu_{0}}{4\pi}\int d^{3}z\,\left(\frac{J(x-z,t-\frac{1}{C}|z|)}{|z|}\right)
\end{eqnarray}
These equations are formulated under the following premises: Coordinates
$y$ and $x$ are described from the same spatial reference system
$S$, whatever it is, and hence at a given time $t$, $x-y$ and in
particular $|x-y|$ are objective invariant quantities. Moreover,
since source and detector are in relative rest, these quantities are
independent of $t$. In the present conception of electromagnetism
there is another objective invariant quantity of relevance, namely
the electromagnetic delay ${\displaystyle \Delta_{0}=t-s}$. The index
$0$ highlights the situation of relative rest between source and
detector. It is the state of point $y$ on the source at the previous
time $s$, where $C(t-s)=|x-y|$ what connects with point $x$ of
the detector at time $t$. Finally, the second row displays the change
of variables $z=x-y$.

In order to address detection in relative motion we advance the following 
\begin{conjecture}
A detector recording solely electromagnetic information (e.g. an electromagnetic
wave) cannot determine its relative velocity with respect to the source
(assumed constant).
\end{conjecture}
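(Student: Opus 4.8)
The plan is to read ``recording solely electromagnetic information'' as: the detector has access only to the electromagnetic field (equivalently, to the potentials $A,V$) along its worldline and in an arbitrarily small neighbourhood of it, together with quantities built from these by differentiation and integration. Since the detector sits away from the source, the data it records satisfies the source-free wave equation $\Box(A,\frac{V}{C})=0$. The strategy is then to exhibit, for every candidate relative velocity $\mathbf{v}$, a source \emph{at rest} with respect to the detector whose own S-by-S-field reproduces the recorded data exactly. If such a co-moving source exists for a continuum of velocities, then no function of the recorded data can return $\mathbf{v}$, which is precisely the content of the conjecture.

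First I would describe detection in relative motion by a velocity-parametrised transformation $T_{\mathbf{v}}$ carrying the source-frame (S-by-S) description of eq.(\ref{eq:rest}) into the detector-frame (S-by-R) description --- the transformation the paper identifies with a Lorentz boost. The essential property to establish is that $T_{\mathbf{v}}$ leaves the D'Alembert operator form-invariant, so that $\Box(A,\frac{V}{C})=0$ in the source frame becomes $\Box'(A',\frac{V'}{C})=0$ in the detector frame with $\Box'$ of the same form. This is the form-invariance that makes eq.(\ref{eq:diff-wave}) a relation between four-vector-like objects; because the free-field equation is the \emph{same} operator identity in either frame, the recorded field is, in the detector's own coordinates, indistinguishable from the free field of a co-moving source. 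Concretely, a signal peaked at $(\omega,\mathbf{k})$ is carried to one peaked at $(\omega',\mathbf{k}')$ still obeying the free dispersion relation $(\omega')^{2}=C^{2}|\mathbf{k}'|^{2}$; writing out the Doppler relation that expresses $(\omega',\mathbf{k}')$ through $(\omega,\mathbf{k})$ and $\mathbf{v}$, one checks that the map from the pair (emitted frequency, velocity) to the observed $(\omega',\mathbf{k}')$ is \emph{not} injective --- a one-parameter family of pairs $(\omega,\mathbf{v})$ yields the same observation. Hence $\mathbf{v}$ cannot be recovered from $(\omega',\mathbf{k}')$ alone.

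The hard part will be upgrading this from a claim about the peak frequency to a claim about \emph{every} electromagnetic observable: a determined experimenter might hope that polarisation, the detailed waveform, the spatial gradients of the field, or the electromagnetic delay $\Delta$ carries the velocity that the scalar frequency conceals. The decisive step is therefore to show that $T_{\mathbf{v}}$ is a genuine symmetry of the \emph{complete} free-field description and not merely of the dispersion relation --- that the entire recorded history $(E,B)$, and every quantity derived from it, is reproduced identically by the co-moving source for a suitable choice of that source's emission. Here the structural reason is that the only frame-invariant built from the wave covector $(\omega/C,\mathbf{k})$ is $\omega^{2}-C^{2}|\mathbf{k}|^{2}$, which \emph{vanishes identically} for a free signal; there is thus no invariant ``rest frequency'' against which the detector could gauge its motion, and $\mathbf{v}$ enters the recorded data only through the redundant Doppler reparametrisation, dropping out of every invariant. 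Verifying that no residual observable escapes this argument --- in particular that the velocity-dependence of the delay $\Delta$ is itself absorbed into the co-moving re-description rather than separately measurable --- is the crux. It is also what separates the present \emph{purely electromagnetic} statement from the (false) assertion that motion relative to a source can never be detected by any means: a \emph{non-constant} velocity would produce radiation and non-inertial effects, whereas the constant-velocity case assumed here leaves no electromagnetic trace.
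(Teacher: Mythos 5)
There is a fundamental mismatch here: the paper does not prove this statement at all. It is deliberately labelled a \emph{Conjecture} and is immediately converted into a postulate (``let us postulate that a detector in relative motion\ldots perceives an EM wave which cannot be distinguished from the one originating in some current-charge at relative rest''). Everything downstream --- the differential delayed interaction condition, Lemma \ref{Lemma:delayed-interaction} in which the Lorentz boosts first appear, Theorem \ref{Thm-comm}, and the Doppler effect --- is \emph{derived from} this postulate. Your proof runs the logic in the opposite direction: you invoke a velocity-parametrised transformation $T_{\mathbf{v}}$ ``the paper identifies with a Lorentz boost,'' its form-invariance of the D'Alembertian, and the invariance of $\omega^{2}-C^{2}|\mathbf{k}|^{2}$, and use these to conclude indistinguishability. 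Within this paper's architecture that is circular: the Lorentz-transformation structure is not available until after the conjecture has been accepted, precisely because the paper refuses to assume special relativity or any prior spacetime symmetry. What you have actually sketched is the standard textbook argument that Lorentz covariance of free Maxwell fields implies no rest frame is detectable --- a consistency check of the received theory, not a proof of the conjecture in the relational framework, where the statement functions as an a priori demand of reason to be confronted with experiment.

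Even taken on its own terms (granting Lorentz covariance from the outset), your argument has a gap at exactly the point you flag as ``the crux.'' The invariance of $\omega^{2}-C^{2}|\mathbf{k}|^{2}$ is a statement about plane waves, i.e.\ about the dispersion relation; it does not by itself dispose of the full recorded history of a field radiated by a localized source. Near-field structure, wavefront curvature, amplitude fall-off and the angular distribution of the radiation all depend on the source--detector geometry, and whether they betray the relative velocity depends entirely on what the detector is assumed to know a priori about the source's rest-frame emission. Neither you nor the paper formalizes that prior knowledge, which is exactly why the statement can only be a conjecture: absent a precise specification of the admissible observables and the detector's prior information, ``cannot determine its relative velocity'' is not a well-posed mathematical claim, and no argument of the type you propose can close it.
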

Consequently, let us postulate that a detector in relative motion
with velocity $v$ with respect to the source perceives an EM wave
which cannot be distinguished from the one originating in some current-charge
\emph{at} \emph{relative rest}. We would like to show something like:
\begin{eqnarray}
(A,\frac{V}{C})_{v}(x,t) & = & \frac{\mu_{0}}{4\pi}\int d^{3}y\,\left(\frac{1}{|x-y|}\right)\,J_{v}(y,t-\Delta)\label{eq:motion}
\end{eqnarray}
with $\Delta=\frac{1}{C}|x-y|$.

In this new situation we still have one reference frame $S$ to describe
both source and detector. Again, $z=x-y$ is an objective quantity,
only that now two differences arise: (a) $x-y$ depends on $t$ because
of the relative motion and (b) the electromagnetic delay may be modified
in order to take into account the relative motion. Throughout this
discussion, $t$ is the (present) time when the electromagnetic interaction
is detected, $(x-y)$ indicates the relative position of (points of)
detector and source at time $t$, $\Delta_{v}=(t-s)_{v}$ is the electromagnetic
delay and $(x-y)_{v}$ is the corresponding relative position at time
$s$ when the electrical disturbance in the source took place, and
the index $v\in\mathbb{R}^{3}$ indicates a situation of relative
motion between source and detector. The index $v$ will be some function
of the relative velocity $u$ between source and detector to be determined
in what follows. Moreover, $(x-y)_{v}$ and $\Delta_{v}$ are objective
and invariant quantities, independent of the choice of reference frame.

We intend to find the correspondence between disturbances in the primary
circuit and actions on the secondary system. We begin by considering
an infinitesimal velocity $\delta v$, with $\frac{d\delta v}{dt}=0$.
In this case we have
\begin{defi}
\textbf{(Differential delayed interaction condition) }In the presence
of relative motion with infinitesimal velocity $\delta v$, a disturbance
originated at point $y$ and time $t-\Delta_{\delta v}$ produces
an electromagnetic action at $(x,t)$, where
\begin{eqnarray*}
C\Delta_{\delta v} & = & |x-y-\Delta_{\delta v}\delta v|.
\end{eqnarray*}
\end{defi}
For $\delta v=0$ the condition reduces to $C\Delta_{0}=|x-y|$, corresponding
to Lorenz' potentials, eqs.(\ref{eq:Lorenz}) and (\ref{eq:kernel})\footnote{Letting $s=t-\Delta_{\delta V}$we may read the definition as a consequence
of: $(x-y)(s)=(x-y)(t)-(t-s)\delta v$.}. Note that $C$ enters in both expressions since we postulate that
the detector in relative motion registers an electromagnetic signal
\emph{as if the source were at relative rest}. This definition leads
to the following
\begin{lem}
\label{Lemma:delayed-interaction}Let $(x-y)_{v}$ be the separation
of source and detector at time $s$ when the electrical disturbance
at the source took place in a situation of relative motion labelled
by $v\in\mathbb{R}^{3}$ and $\Delta_{v}$ the corresponding electromagnetic
delay, while $(x-y)_{0},\Delta_{0}$ are the corresponding quantities
for source and detector at relative rest. Then, for each $v$ the
\textbf{delayed interaction condition} satisfies 
\[
\left(\begin{array}{c}
\left(x-y\right)\\
C\Delta
\end{array}\right)_{v}=\exp\left(-\left(\begin{array}{cc}
\mathbf{0} & {\displaystyle \frac{v}{C}}\\
{\displaystyle \frac{v}{C}}^{T} & 0
\end{array}\right)\right)\left(\begin{array}{c}
\left(x-y\right)\\
C\Delta
\end{array}\right)_{0}
\]
\end{lem}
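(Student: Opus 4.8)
The plan is to read the claimed identity as the \emph{integrated} (finite-parameter) form of the infinitesimal \textbf{differential delayed interaction condition} and to establish it by exponentiating the corresponding generator. First I would repackage the geometry into the $4$-vector $W=\bigl((x-y),\,C\Delta\bigr)^{T}$, whose rest value $W_{0}=\bigl((x-y)_{0},\,C\Delta_{0}\bigr)^{T}$ is null because $C\Delta_{0}=|(x-y)_{0}|$. Setting $\beta=v/C$, $b=|\beta|=|v|/C$ and $\hat v=v/|v|$, the matrix in the exponent is
\[
M=\begin{pmatrix}\mathbf{0} & \beta\\ \beta^{T} & 0\end{pmatrix}=b\,G,\qquad G=\begin{pmatrix}\mathbf{0} & \hat v\\ \hat v^{T} & 0\end{pmatrix},
\]
so the target reduces to proving $W_{v}=\exp(-bG)\,W_{0}$, i.e. that $-G$ is the generator of the family and that the family is a one-parameter group.

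The core computation is the linearisation of the condition about the rest configuration. Writing $s=t-\Delta_{\delta v}$ and using the footnote identity $(x-y)(s)=(x-y)(t)-(t-s)\,\delta v$ with the present separation equal to the rest value $(x-y)_{0}$, the condition $C\Delta_{\delta v}=|x-y-\Delta_{\delta v}\delta v|$ expands to first order in $\delta v$ as
\begin{eqnarray*}
(x-y)_{\delta v} & = & (x-y)_{0}-\Delta_{0}\,\delta v\\
C\Delta_{\delta v} & = & C\Delta_{0}-\tfrac{1}{C}\,(x-y)_{0}\cdot\delta v,
\end{eqnarray*}
the second line obtained by differentiating the norm and using $|(x-y)_{0}|=C\Delta_{0}$. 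In block form this reads $\delta W=-\tfrac{|\delta v|}{C}\,G\,W_{0}$, which identifies $-G$ as the infinitesimal generator of the motion family along $\hat v$.

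I would then integrate. Parametrising the straight (collinear) path from rest to $v$ by $v(\lambda)=C\lambda\hat v$, $\lambda\in[0,b]$, the perception postulate --- that a detector in relative motion registers the signal \emph{as if emitted by a source at relative rest} --- permits reapplying the infinitesimal condition at each stage with $W(\lambda)$ playing the role of the rest data. Because every increment is along the same $\hat v$, the generator $G$ does not change, so $W$ solves the constant-coefficient linear system $\frac{dW}{d\lambda}=-G\,W(\lambda)$ with $W(0)=W_{0}$, whose solution is $W(b)=\exp(-bG)W_{0}=\exp(-M)W_{0}$. To render the exponential explicit I would use
\[
M^{2}=\begin{pmatrix}\beta\beta^{T} & 0\\ 0 & b^{2}\end{pmatrix},\qquad M^{3}=b^{2}M,
\]
whence the Rodrigues-type closed form $\exp(-M)=I-\frac{\sinh b}{b}\,M+\frac{\cosh b-1}{b^{2}}\,M^{2}$. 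This acts as the identity on spatial directions orthogonal to $v$ and as the hyperbolic rotation $\begin{pmatrix}\cosh b & -\sinh b\\ -\sinh b & \cosh b\end{pmatrix}$ in the plane spanned by $\hat v$ and the delay axis, i.e. a Lorentz boost of rapidity $b=|v|/C$, matching the stated matrix.

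The hard part is the integration step, not any single algebraic manipulation: one must justify that the infinitesimal condition may be iterated with a \emph{constant} generator. This rests on two points I would argue carefully --- first, that the perception postulate genuinely makes each infinitesimal increment structurally identical to the one built at $\delta v=0$, so the family really is a one-parameter group rather than a path with a $\lambda$-dependent generator; and second, that confining the path to be collinear with $v$ keeps all the action inside the fixed $(\hat v,\text{delay})$ plane, so no rotation of the boost axis (no Wigner-rotation-type obstruction) intrudes and the rapidities simply add to give total rapidity $|v|/C$.
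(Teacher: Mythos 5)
Your proposal is correct and follows essentially the same route as the paper: both linearize the delayed interaction condition about relative rest to obtain the same infinitesimal generator, and both pass to finite $v$ by composing infinitesimal boosts along the fixed direction $\hat{v}$ --- the paper via the Trotter/Lie product limit $\lim_{n\rightarrow\infty}\left(I-\frac{1}{n}M\right)^{n}$, you via the equivalent constant-coefficient linear ODE $\frac{dW}{d\lambda}=-GW$. Your explicit Rodrigues-type closed form for $\exp(-M)$ and your remarks on collinearity (no Wigner-rotation obstruction) are useful additions, but they reproduce material the paper defers to its Appendix and later remarks rather than constituting a different argument.
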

\begin{proof}
To lowest order in $\delta v$ the difference in $\Delta$'s is:
\begin{eqnarray*}
C\left(\Delta_{\delta v}-\Delta_{0}\right) & = & \sqrt{|x-y|^{2}-2\left(x-y\right)\cdot\delta v\Delta_{\delta v}+|\delta v|^{2}\Delta_{\delta v}^{2}}-|x-y|\\
 & = & -\frac{\left(x-y\right)}{|x-y|}\cdot\delta v\Delta_{0}+O(\delta v^{2})=-\left(x-y\right)\cdot\frac{\delta v}{C}+O(\delta v^{2})
\end{eqnarray*}
In this limiting case the condition reads 
\begin{eqnarray}
\left(\begin{array}{c}
\left(x-y\right)_{\delta v}\\
C\Delta_{\delta v}
\end{array}\right) & = & \left(\begin{array}{c}
(x-y)-\delta v\Delta_{0}\\
C\Delta_{0}-\left(x-y\right)\cdot\frac{\delta v}{C}
\end{array}\right)\label{eq:TL-coord}\\
 & = & \left[\left(\begin{array}{cc}
\mathbf{1} & \mathbf{0}\\
\mathbf{0} & 1
\end{array}\right)-\left(\begin{array}{cc}
0 & {\displaystyle \frac{\delta v}{C}}\\
({\displaystyle \frac{\delta v}{C}})^{T} & 0
\end{array}\right)\right]\left(\begin{array}{c}
(x-y)\\
C\Delta_{0}
\end{array}\right).\nonumber 
\end{eqnarray}
In other words, there exists an infinitesimal transformation on $\mathbb{R}^{3+1}$
connecting the condition for $v=0$ with that for $\delta v$. By
the Trotter product formula we obtain Lie's result for finite $v$
as a repeated composition of infinitesimal shifts,
\begin{eqnarray}
TL(-v) & \equiv & \exp\left(-\left(\begin{array}{cc}
\mathbf{0} & {\displaystyle \frac{v}{C}}\\
{\displaystyle \frac{v}{C}}^{T} & 0
\end{array}\right)\right)\nonumber \\
 & = & \lim_{n\rightarrow\infty}\left[\left(\begin{array}{cc}
\mathbf{1} & \mathbf{0}\\
\mathbf{0} & 1
\end{array}\right)-\frac{1}{n}\left(\begin{array}{cc}
\mathbf{0} & {\displaystyle \frac{v}{C}}\\
{\displaystyle \frac{v}{C}^{T}} & 0
\end{array}\right)\right]^{n}\label{eq:groupalgebra}
\end{eqnarray}
thus proving the statement.
\end{proof}
\begin{rem}
Explicit formulae for the Lorentz transformations are shown in the
Appendix \ref{sec apendice LT}. The more familiar form $L(u)$ of
the transformation is displayed in 
\begin{eqnarray}
\left(\begin{array}{c}
z_{u}\\
C\Delta_{u}
\end{array}\right) & = & L(u)\left(\begin{array}{c}
z\\
C\Delta_{0}
\end{array}\right)=\left(\begin{array}{c}
z+(\gamma-1)\hat{u}\left(\hat{u}\cdot z\right)+\gamma{\displaystyle \frac{u}{C}}C\Delta_{0}\\
\gamma\left(C\Delta_{0}+{\displaystyle \frac{u\cdot z}{C}}\right)
\end{array}\right),\label{eq:Lform}
\end{eqnarray}
where $u={\displaystyle C\hat{v}\tanh\left|\frac{v}{C}\right|}$ and
we use the shorthand $x-y=z$. There is a 1-to-1 correspondence in
Lemma \ref{Lemma:delayed-interaction}, between the two presentations
of the Lorentz transformations, namely $TL(-v)\equiv L(-u)$. Hence,
we will use only $u$ in the sequel. $u$ is interpreted as the relative
velocity between source and detector.\\
The basis for the interpretation of $u$ as the relative velocity
is as follows. Consider the vector space $\mathbb{R}^{3+1}\equiv\mathbb{R}^{3}\times\mathbb{R}$
associated to relative positions and relative time. A Lorentz transformation
(LT), eq.\eqref{eq:Lform}, as well as a Galilean transformation GT,
\[
\left(\begin{array}{c}
Z^{\prime}\\
T^{\prime}
\end{array}\right)=\left(\begin{array}{cc}
\mathbf{1} & u\\
0 & 1
\end{array}\right)\left(\begin{array}{c}
Z\\
T
\end{array}\right)
\]
\end{rem}
can be regarded as endomorphisms of $\mathbb{R}^{3+1}$ mapping a
situation at relative rest onto a situation of relative motion. While
the velocity $u$ in the GT has a mechanical origin, the parameter
$u$ in LT is an abstract parameter used to classify transformations
and a point of contact with the underlying physical problem is required
to furnish a physical interpretation to the LT's. Considering lines
on $\mathbb{R}^{3+1}$ associated to a fixed relative position, $Z$
and different time-intervals, we obtain for the Galilean transformation
the (physical) relative velocity $u=\frac{Z^{\prime}(T_{1})-Z^{\prime}(T_{0})}{T^{\prime}(T_{1})-T^{\prime}(T_{0})}$
while in the case of the Lorentz transformation we obtain
\begin{eqnarray*}
\frac{z^{\prime}(\tau_{1})-z^{\prime}(\tau_{0})}{\tau^{\prime}(\tau_{1})-\tau^{\prime}(\tau_{0})} & = & \frac{\gamma u(\tau_{1}-\tau_{0})}{\gamma(\tau_{1}-\tau_{0})}=u.
\end{eqnarray*}
While the GT preserves times and as such can be viewed as a transformation
in relational-space only, the LT preserves $|z|^{2}-(C\tau)^{2}$
and, as a particular case, the condition of being in electromagnetic
contact, $|z|^{2}-(C\tau)^{2}=0$. We may associate the same relational
velocity to both GT and LT.

Eq.(\ref{eq:groupalgebra}) displays the action of a Lorentz' boost
\citep{gilm74} in the Lie algebra (rhs) and group (lhs). The generators
of the Lorentz boosts plus the generators of the rotations constitute
the basis of the Lie algebra which exponentiated gives the Poincaré-Lorentz
group. While the spatial rotations form a subgroup of the Poincaré-Lorentz
group, the Lorentz boosts do not. Any element of the Poincaré-Lorentz
group can be written as a product: $P=L(u)R(\Omega)$ as well as $P=R(\Omega)L(u^{\prime})$
being $\Omega$ a 3d-rotation and $u^{\prime}=R(\Omega)u$ . These
forms are known as left and right coset decompositions of the group
\citep{hame62,gilm74}.
\begin{rem}
By construction of the $LT's,$there is an upper limit for having
electromagnetic contact amenable to be related with situations at
relative rest. While there is no mechanical limit to relative velocity,
the present theory describes electromagnetic interactions only for
$|u|<C$ .
\end{rem}
\begin{rem}
Eqs.\ref{eq:TL-coord} and \ref{eq:Lform} for the detector and source
points, $x,y$ which are in electromagnetic interaction at time $t$,
display their relative position $(x-y)_{u}$ at the time $t-\Delta_{u}$
when the disturbance in the source took place. The ratio $\frac{|(x-y)_{u}|}{\Delta_{u}}=C$
is always satisfied by construction.
\end{rem}
Next, we note that the propagation kernel can be more properly written
as
\[
K=\left\{ \begin{array}{cc}
0, & \quad\left(t-s\right)<0\\
{\displaystyle \frac{\delta(t-s-\frac{1}{C}|x-y|)}{|x-y|}}, & \quad\left(t-s\right)\ge0.
\end{array}\right.
\]
Hence, we have the following
\begin{lem}
\textbf{(Symmetric form of the propagation kernel) }\label{Lem:sym}
Lorenz propagation kernel can be rewritten as

\begin{equation}
K=\left\{ \begin{array}{cc}
0, & \quad\left(t-s\right)<0\\
{\displaystyle \frac{2}{C}\delta(\left(t-s\right)^{2}-\frac{1}{C^{2}}|x-y|^{2})}, & \quad\left(t-s\right)\ge0.
\end{array}\right.\label{eq:kernel-beauty}
\end{equation}
\end{lem}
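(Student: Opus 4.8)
The plan is to reduce the claimed identity to the standard rule for the Dirac delta of a composite argument, treating $K$ as a distribution in the time variable $\tau \equiv t-s$ with $r \equiv |x-y|$ held fixed as a parameter. Recall that for a smooth function $g$ possessing only simple zeros $\tau_i$ in the domain of integration, one has $\delta(g(\tau)) = \sum_i \delta(\tau-\tau_i)/|g'(\tau_i)|$. Both forms of $K$ in eq.(\ref{eq:kernel-beauty}) vanish for $(t-s)<0$, so it suffices to establish the equality on the region $\tau \ge 0$.

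First I would factor the quadratic argument as $(t-s)^2 - \frac{1}{C^2}|x-y|^2 = \left(\tau - \frac{r}{C}\right)\left(\tau + \frac{r}{C}\right)$, so that, viewed as a function $g(\tau)$ at fixed $r\ge 0$, its zeros are $\tau = \pm r/C$. The support restriction $(t-s)\ge 0$ that is built into both presentations of $K$ discards the negative root $\tau=-r/C$, leaving only $\tau = +r/C$ to contribute in the admissible domain.

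Next I would evaluate the Jacobian factor at the surviving zero: since $g'(\tau) = 2\tau$, one has $|g'(r/C)| = 2r/C$. The composite-delta rule then yields $\delta\!\left((t-s)^2 - \frac{1}{C^2}|x-y|^2\right) = \frac{C}{2r}\,\delta\!\left(t-s-\frac{r}{C}\right)$ on $\tau\ge 0$, and multiplying by the prefactor $\frac{2}{C}$ gives exactly $\frac{1}{|x-y|}\delta\!\left(t-s-\frac{1}{C}|x-y|\right)$, the original Lorenz kernel. The constant $2/C$ is precisely what is needed to cancel the Jacobian $2r/C$, which is the content of the lemma.

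The only point requiring care — and the main (mild) obstacle — is the distributional bookkeeping. The identity must be read under integration against a test function in $t-s$ (equivalently in $s$), with $x-y$ fixed, so that the composite-delta rule applies legitimately; and the degenerate configuration $r=0$, where the two roots collide into a double zero at $\tau=0$, forms a measure-zero set in the $y$-integration and therefore does not affect the equality of the two kernels as distributions. With these provisos the two expressions for $K$ agree on $\tau\ge 0$ and both vanish for $\tau<0$, establishing the lemma.
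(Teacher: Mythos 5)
Your proposal is correct and rests on the same key fact as the paper's own proof: the scaling/composition identity for the Dirac delta, which produces the Jacobian factor $2r/C$ that the prefactor $2/C$ cancels, together with the restriction $(t-s)\ge 0$ selecting the root $\tau=+r/C$. The only (immaterial) difference is direction — you expand $\delta\bigl((t-s)^2-\tfrac{1}{C^2}|x-y|^2\bigr)$ over its simple roots to recover the Lorenz kernel, while the paper inserts the factor $\frac{2|x-y|}{C(t-s)+|x-y|}$ (equal to $1$ on the delta's support) into the Lorenz kernel and absorbs it into the delta's argument — so the two computations are the same identity read in opposite directions, with your treatment of the degenerate point $r=0$ being a small bonus the paper omits.
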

\begin{proof}
In the distribution sense $K={\displaystyle \frac{2|x-y|}{C(t-s)+|x-y|}}K$.
By another distributional property, for any $g(s)$ such that $g(s_{0})\ne0$
it holds that ${\displaystyle \frac{\delta(s-s_{0})}{|g(s)|}}=\delta(g(s)(s-s_{0}))$.
In this case, $g(s)=t-s+{\displaystyle \frac{1}{C}}|x-y|$. Hence,
we obtain the symmetric kernel expression of eq.\eqref{eq:kernel-beauty}.
\end{proof}
\begin{thm}
\label{Thm-comm} The Lorenz propagation kernel $K(x,t;y,s)$ has
the following properties in relation to Lorentz transformations
\begin{eqnarray*}
K(L_{u}(x,t);L_{u}(y,s)) & = & K(x,t;y,s)\\
K(L_{u}(x,t);y,s) & = & K(x,t;L_{-u}(y,s))\\
\int d^{3}y\,ds\left[K(L_{u}(x,t);y,s)J(y,s)\right] & = & \int d^{3}y\,ds\left[K(x,t;y,s)J(L_{u}(y,s))\right]
\end{eqnarray*}
The last equation reads: the transformation of the potentials are
the potentials associated to the transformations of the currents.
We say then that the linear operator associated with $K$ commutes
with the Lorentz transformation.
\end{thm}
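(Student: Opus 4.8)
The plan is to reduce all three identities to a single structural fact already supplied by Lemma~\ref{Lem:sym}: the symmetric form of the kernel exhibits $K$ as a constant multiple of a Dirac delta evaluated on the Lorentz-invariant interval, restricted to the forward time cone. Writing $z=x-y$ and $\tau=t-s$, Lemma~\ref{Lem:sym} gives $K=\frac{2}{C}\,\delta\!\left(\tau^{2}-\frac{1}{C^{2}}|z|^{2}\right)$ on the support $\tau\ge 0$, with a constant prefactor $2/C$. Since this depends on the two spacetime points only through their difference $(z,C\tau)$, and since $L_u$ acts as a linear map on $\mathbb{R}^{3+1}$ (see \eqref{eq:Lform}), the entire problem is governed by how $L_u$ transforms that single difference vector.

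For the first identity I would first invoke linearity to write $L_u(x,t)-L_u(y,s)=L_u(z,C\tau)$, so that $K(L_u(x,t);L_u(y,s))=\frac{2}{C}\,\delta\!\left((\tau')^{2}-\frac{1}{C^{2}}|z'|^{2}\right)$ with $(z',C\tau')=L_u(z,C\tau)$. By the invariance of the quadratic form recorded in the Remark following Lemma~\ref{Lemma:delayed-interaction}, one has $|z'|^{2}-(C\tau')^{2}=|z|^{2}-(C\tau)^{2}$, so the argument of the delta is unchanged. It then only remains to check that the support condition $\tau\ge 0$ is preserved: this is the orthochronicity of the boost, valid for $|u|<C$ on the null set where $K$ is supported, so a forward null separation maps to a forward null separation. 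Together these give $K(L_u(x,t);L_u(y,s))=K(x,t;y,s)$.

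The second identity then follows from the first together with the group relation $L_u^{-1}=L_{-u}$ implicit in \eqref{eq:groupalgebra}: substituting $(y,s)\mapsto L_{-u}(y,s)$ into the first identity and using $L_uL_{-u}=\mathrm{Id}$ yields $K(L_u(x,t);y,s)=K(x,t;L_{-u}(y,s))$. For the third identity I would begin with the left-hand side, apply the second identity inside the integrand to replace $K(L_u(x,t);y,s)$ by $K(x,t;L_{-u}(y,s))$, and then change variables $(y,s)\mapsto L_u(y,s)$. Because every Lorentz boost has unit Jacobian, $\det L_u=1$, the measure $d^{3}y\,ds$ is invariant, and the integral collapses to $\int d^{3}y\,ds\,K(x,t;y,s)\,J(L_u(y,s))$, the desired right-hand side.

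The hard part is not the invariance of the interval, which is essentially the defining property of $L_u$; it is the careful handling of the delta distribution and the support condition. Two points deserve attention: first, that evaluating $\delta$ of an invariant at transformed arguments introduces no spurious Jacobian, precisely because one composes with $L_u$ pointwise rather than integrating, so the invariant is literally the same function of the transformed variables; and second, that the restriction $\tau\ge 0$ survives the boost, which is exactly where the hypothesis $|u|<C$ (orthochronicity) is indispensable. Once these two points are secured, the remaining steps are the elementary group-inverse substitution and the unit-Jacobian change of variables.
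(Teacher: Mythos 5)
Your proposal is correct and follows essentially the same route as the paper's own proof: both rest on the symmetric form of the kernel from Lemma~\ref{Lem:sym}, derive the first identity from the invariance of the quadratic interval and the preservation of the sign of $t-s$, obtain the second via $L_{u}L_{-u}=\mathrm{Id}$, and get the commutation relation by a change of integration variables. Your treatment is in fact slightly more explicit than the paper's on three points it leaves implicit — the linearity of $L_{u}$ acting on the difference vector, the unit Jacobian in the change of variables, and the restriction of the time-sign argument to the null support of the kernel where orthochronicity for $|u|<C$ applies.
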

\begin{proof}
It is straightforward to verify that the argument of the $\delta$-distribution
in eq.\eqref{eq:kernel-beauty} is invariant upon Lorentz transformations,
namely that if\\
 $\left((x-y)_{u},C(t-s)_{u}\right)$ satisfy eq.(\ref{eq:Lform}),
then $\left(t-s\right)^{2}-{\displaystyle \frac{1}{C^{2}}}|x-y|^{2}=\left(t-s\right)_{u}^{2}-{\displaystyle \frac{1}{C^{2}}}|(x-y)_{u}|^{2}$
and also $(t-s)\ge0\Longleftrightarrow(t-s)_{u}\ge0$. Thus,
\[
K=\left\{ \begin{array}{cc}
0, & \quad\left(t-s\right)_{u}<0\\
{\displaystyle \frac{2}{C}\delta(\left(t-s\right)_{u}^{2}-\frac{1}{C^{2}}|\left(x-y\right)_{u}|^{2})}, & \quad\left(t-s\right)_{u}\ge0.
\end{array}\right.
\]
is independent of $u$. Using the first property it follows that $K(L_{u}(x,t);y,s)=K(L_{u}(x,t;L_{u}L_{-u}(y,s))=K(x,t;L_{-u}(y,s))$.
The commutation relation is the result of integrating the kernel to
produce a linear operator and changing integration variables $((y,s)\mapsto L_{u}(y^{\prime},s^{\prime})$.
\end{proof}
\begin{rem}
The points that are in electromagnetic connection are characterised
by $\left(C(t-s)\right)^{2}-\left|x-y\right|^{2}=0$. Calling $\tau_{u}\equiv(t-s_{0})_{u}$
and $\chi_{u}\equiv(x-y)_{u}$, the interaction kernel is the convolution
kernel of $\delta(\tau_{u}^{2}-(\chi_{u}/C)\textasciicircum2)$ which
can be split in two contributions, one for $\tau_{u}\ge0$ and another
for $\tau_{u}\le0$. But, if $(0,0)$ is influencing $(\tau_{0},\chi_{0})$
for $\tau_{0}\ge0$, it results that $\tau_{u}>0$ (using that $|u\cdot x/C^{2}|=\frac{|u\cdot\chi|}{|\chi||u|}\frac{|u||\chi|}{C^{2}}<\frac{|u||\chi|}{C^{2}}$)
hence the splitting is really in terms of \emph{influencing}, $\tau_{u}\ge0$,
vs. \emph{being influenced,} $\tau_{u}\le0$. This separates the sets
in a form invariant with respect to $u$.
\end{rem}

\subsubsection{Perceived fields and inferred currents-charges}

Examining eq.\eqref{eq:motion}, we note that it represents a convolution
product with convolution kernel $\kappa(z,r)$, with $K(x,t;y,s)=\kappa(x-y,t-s)$
and that
\[
(A,\frac{V}{C})_{u}=\frac{\mu_{0}}{4\pi}\kappa*J_{u}=\frac{\mu_{0}}{4\pi}J_{u}*\kappa
\]
where the convolution is in time and space.

According to eq.\eqref{eq:TL-coord}, the arguments in the current
are $(x-y,t-s)$, for $u=0$. For $u\ne0$ the points that are in
electromagnetic relation according to Lemma \ref{Lemma:delayed-interaction}
are $((x-y)_{u},(t-s)_{u})$, thus in $J_{u}*\kappa$, we propose
\begin{conjecture}
The arguments of the effective current are\textbf{ }$((x-y)_{u},(t-s)_{u})$,
i.e.,
	$J_{u}=L(-u)J(L(u)(x-y,t-s))$, where $J$ is the current-charge
measured by the source.
\end{conjecture}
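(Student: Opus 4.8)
The plan is to show that the indistinguishability postulate (the Conjecture preceding the Definition of the differential delayed interaction condition) leaves exactly one consistent choice for the effective current, and that this choice is the stated four-vector transform. First I would invoke that postulate in its functional form: a detector in relative motion with velocity $u$ must register a signal indistinguishable from one produced by a source at relative rest, which means $(A,\frac{V}{C})_u$ of eq.\eqref{eq:motion} has to be representable by the \emph{same} Lorenz kernel as the rest case eq.\eqref{eq:rest}, i.e. $(A,\frac{V}{C})_u=\frac{\mu_0}{4\pi}\kappa\ast J_u$ for some effective four-vector $J_u$. This pins down the form of the convolution and reduces the problem to identifying $J_u$.

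Second, I would locate the spacetime points at which the current must be sampled. By Lemma \ref{Lemma:delayed-interaction}, the pair (relative position, delay) in the situation of relative motion is the image under $L(-u)$ of the corresponding pair at relative rest; equivalently, applying $L(u)$ to the perceived argument $(x-y,t-s)$ recovers the rest-frame event at which the source current $J$ is actually specified. Hence the argument of $J$ inside the convolution is forced to be $L(u)(x-y,t-s)$, which already reproduces the argument structure asserted in the statement.

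Third comes the covariance step, where I would use Theorem \ref{Thm-comm}. Its commutation relation states that transforming the evaluation point of the kernel by $L_u$ is equivalent to transforming the current's argument by $L_u$, the kernel itself being Lorentz-invariant. Combining this with the fact that $(A,\frac{V}{C})$ and $(j,C\rho)$ are both four-vectors, covariance of the convolution under $L(-u)$ requires the transformed field to take the canonical form $\Lambda\,J(\Lambda^{-1}\,\cdot)$ with $\Lambda=L(-u)$; using $\Lambda^{-1}=L(u)$ this is precisely $J_u=L(-u)\,J\bigl(L(u)(x-y,t-s)\bigr)$, so that the boosted components accompany the boosted argument.

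The hard part, and the reason the assertion is posed as a conjecture rather than a theorem, is the passage from ``the arguments transform'' to ``the components transform as a four-vector.'' Lemma \ref{Lemma:delayed-interaction} compels the geometric transformation of the spacetime arguments and Theorem \ref{Thm-comm} guarantees invariance of the kernel, but neither alone forces the perceived current \emph{components} to be rotated and boosted by $L(-u)$. That component transformation is exactly the ``point of contact with the underlying physical problem'' mentioned in the Remark after eq.\eqref{eq:Lform}: it is the physical input that the current-charge is apprehended by the moving detector as a genuine four-vector field. This input closes the argument but cannot be extracted from the preceding formal results, which is why the four-vector character of $J_u$ must be \emph{postulated} rather than deduced.
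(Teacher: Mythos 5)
Your conclusion is the right one: this statement is posed as a conjecture precisely because it cannot be deduced from the preceding formal results, and the paper never attempts a proof of it. Instead it adopts eq.\eqref{eq:properJ} as a \emph{definition} of the inferred current, ``a demand of reason to be later confronted with empirical results.'' Your first two steps (the indistinguishability postulate fixes the kernel, Lemma \ref{Lemma:delayed-interaction} fixes the sampling points $((x-y)_{u},(t-s)_{u})$) reproduce essentially the paper's own motivating sentences immediately before the conjecture, and your diagnosis that the transformation of the \emph{components} by $L(-u)$ is the step that cannot be forced is accurate. Indeed your third step, which tries to extract that component law from Theorem \ref{Thm-comm} together with ``the fact that $(j,C\rho)$ is a four-vector,'' is circular, as you yourself concede: Theorem \ref{Thm-comm} only relocates arguments of the kernel, and the operator identities of Lemma \ref{Lemma:Op} are consequences drawn \emph{from} the conjecture, not premises for it.

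Where the paper's motivation genuinely differs from yours is in how it narrows the choice. Rather than a covariance argument, it observes that there are three candidate transformations of current-charge that preserve the continuity equation: two Galilean ones, eqs.\eqref{eq:GaliA} and \eqref{eq:GaliB}, and the Lorentzian one \eqref{eq:properJ}. It then eliminates the Galilean pair by Maxwell's theorem ([602] of the Treatise): adopting \eqref{eq:GaliA} forces \eqref{eq:GaliB} in order to preserve the mechanical force, but then the perceived potentials are not waves, which empirical evidence rules out. The remaining Lorentzian form is adopted a priori (as a demand of constructive reason, in the sense of the No Arbitrariness Principle) and is validated a posteriori by the Doppler effect, Theorem \ref{Doppler-effect}. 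This elimination-among-continuity-preserving-alternatives, anchored in the constraint that $J_{u}$ must still satisfy the continuity equation, is the one substantive piece of the paper's reasoning your proposal misses; your route buys a cleaner geometric picture of why the argument of $J$ must be $L(u)(x-y,t-s)$, but it provides no means of discriminating the conjectured form from the rival transformations that equally respect continuity.
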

At this point we must notice that there are three forms in which current-charge
can be transformed to produce a new pair satisfying the continuity
equation. Two of them are Galilean:
\begin{eqnarray}
(j,C\rho)(x,t) & = & (j-v\rho,C\rho)(x+vt,t)\label{eq:GaliA}\\
(j,C\rho)(x,t) & = & ({\displaystyle j,C\rho-\frac{v}{C}\cdot j})(x,t+\frac{v\cdot x}{C^{2}})\label{eq:GaliB}\\
(j,C\rho)_{u}(x,t) & = & L(-u)(j,C\rho)(L(u)(x,t))\label{eq:properJ}
\end{eqnarray}
 In the third form, the leftmost $L$ acts on the charge-current $4D$-vector
while the rightmost acts on the space-time coordinates.

If the form \eqref{eq:GaliA} is adopted, a theorem due to Maxwell
\citep[{[602]}][]{maxw73} shows that from the point of view of the
receiver the transformation \eqref{eq:GaliB} must be applied to preserve
the mechanical force but in such case the perceived potentials/fields
are not waves. The empirical evidence has judged this view as not
correct.

We propose to adopt eq.\eqref{eq:properJ} as a definition of the
inferred current. We insist at this point that the symmetry is not
an a-posteriori observation of the formulae, but rather an a-priori
demand of constructive reason as explained in \citep{sola18b}. The
transformation of current-charge presents itself as a demand of reason
to be later confronted with empirical results. That a charge density
in motion can be perceived as a current is a belief firmly adopted
since Weber's electrodynamic studies \citep{webe46} and we are habituated
to accept it, while that a neutral current in motion will be perceived
as charge is not rooted in our beliefs in the same way, despite the
fact that Maxwell's theorem already opened for that possibility.
\begin{rem}
\label{Rem:SR-fields} The symmetric form of $K$ is especially appealing
when consider the backwards propagation kernel, as in the equation
pairs (\ref{eq:Lorenz})--(\ref{eq:Lorenz+}) and (\ref{eq:kernel})--(\ref{eq:KerAdv}).
The backward propagation kernel is the result of inverting the time
inequalities in \ref{eq:kernel-beauty}.
\end{rem}
\begin{rem}
Which is the meaning of a successive application of Lorentz' transformations
to a current? The meaning we find apt is that if $J_{u}=L_{-u}J(L_{u}(x,t))$,
then $J=L_{u}J_{u}(L_{-u}(x,t))$ (since Lorentz transformations have
as inverse the transformation based on minus the velocity) and correspondingly \\ 
$J_{u^{\prime}}=L_{-u^{\prime}}L_{u}J_{u}(L_{u^{\prime}}L_{-u}(x,t))$.
Since $L_{u^{\prime}}L_{-u}$ is a general element of the Poincaré-Lorentz
group, $L_{u^{\prime}}L_{-u}=L_{u^{\prime}\ominus u}R(u^{\prime},u)$
with $u^{\prime}\ominus u$ the coset addition of velocities, also
known as Einstein's addition \citep{gilm74} and $R(u^{\prime},u)$
a Wigner rotation\footnote{Wigner was not the first to study the group structure associated to
Lorentz transformations or to mention the rotation. At least Silberstein
\citep[p. 167]{silberstein1914} in the published notes of his 1912-1913
course on Relativity at the University College, London, preceded Wigner,
who acknowledged this precedence.} . Thus, the Poincaré-Lorentz group allows to convert between inferred
currents or fields associated to different detectors in relative motion
with respect to the same source. Notice that the relative velocity
between both receptors is $u^{\prime}-u$ but the correspondence of
electromagnetic perceptions is not $L(u^{\prime}-u)$ which might
even not exist.
\end{rem}
Next, we explore the consequences of this proposal.\foreignlanguage{english}{
Let us define operators acting on scalar or vector functions, $J$,
of $(x,t)$ as 
\begin{eqnarray}
\widehat{K}\left[J\right](x,t) & \equiv & \iint d^{3}z\,d\Delta\,K(z,\Delta)J(x-z,t-\Delta)\label{eq:conv}\\
\widehat{L}_{u}\left[J\right] & \equiv & J(L(u)(x,t))\nonumber \\
(\widehat{A}\circ\widehat{B})[J] & \equiv & \widehat{A}\left[\widehat{B}[J]\right]\nonumber 
\end{eqnarray}
The first line defines the action of the propagating kernel as a convolution,
the second the action of a Lorentz transformation on the coordinates
(recall that }$u={\displaystyle Cv\tanh\left|\frac{v}{C}\right|}$)
while the third relation establishes notation.
\begin{lem}
\label{Lemma:Op} According to the previous discussion, the perceived
potentials read
\begin{equation}
(A,\frac{V}{C})_{u}=\widehat{K}\left[J_{u}\right]\label{eq:K-operator-form}
\end{equation}
In addition, we have the following identities 
\begin{eqnarray*}
\widehat{K}\left[J_{u}\right] & = & L(-u)\widehat{L}_{u}\left[\widehat{K}\left[J\right]\right]
\end{eqnarray*}
\end{lem}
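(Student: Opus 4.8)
The plan is to treat the two assertions separately: the first is essentially a translation into operator notation, while the second carries the real content. For the identity $(A,\frac{V}{C})_{u}=\widehat{K}[J_{u}]$, I would simply unpack the operator $\widehat{K}$ of eq.(\ref{eq:conv}). By construction $\widehat{K}[J_u](x,t)$ is the convolution of the Lorenz kernel against $J_u$ in space and time, so it coincides with the convolution form $\kappa\ast J_{u}$ already written for the perceived potentials just above eq.(\ref{eq:properJ}) and, equivalently, with eq.(\ref{eq:motion}). Thus the first line is a restatement of the convolution form of the perceived potentials in the notation of eq.(\ref{eq:conv}), with the understanding that $\widehat{K}$ carries the Lorenz-kernel normalisation $\frac{\mu_0}{4\pi}$.

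For the identity $\widehat{K}[J_{u}]=L(-u)\widehat{L}_{u}[\widehat{K}[J]]$, I would proceed in three moves. First, insert the proposed definition of the inferred current, eq.(\ref{eq:properJ}), which in the present notation reads $J_{u}=L(-u)\widehat{L}_{u}[J]$, where the outer $L(-u)$ acts on the charge-current $4$-vector components and $\widehat{L}_{u}$ reparametrises the space-time arguments. Second, pull the matrix $L(-u)$ through the convolution: since $L(-u)$ is a constant linear map on the components of $J$ (independent of $(x,t)$) while $\widehat{K}$ is a scalar convolution acting diagonally on each component and only on the coordinates, the two operations commute, giving $\widehat{K}[L(-u)\widehat{L}_{u}[J]]=L(-u)\widehat{K}[\widehat{L}_{u}[J]]$. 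Third, invoke Theorem \ref{Thm-comm}: its last displayed equation states precisely that $\widehat{K}[\widehat{L}_{u}[J]]=\widehat{L}_{u}[\widehat{K}[J]]$, i.e.\ convolution with the Lorenz kernel commutes with the coordinate Lorentz transformation. Substituting this yields $\widehat{K}[J_{u}]=L(-u)\widehat{L}_{u}[\widehat{K}[J]]$, as claimed.

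I expect the only genuine subtlety to be the second move: carefully distinguishing the two roles played by the Lorentz transformation, as a matrix $L(-u)$ acting on the vector components of $J$ versus as the reparametrisation $\widehat{L}_u$ of the arguments, and verifying that the component-mixing matrix commutes with the scalar, componentwise convolution $\widehat{K}$. The heavier step, the third move, has already been established in Theorem \ref{Thm-comm}, which rests on the Lorentz-invariance of the $\delta$-argument in the symmetric kernel of eq.(\ref{eq:kernel-beauty}); here I would only need to restate that integral identity in the operator language of eq.(\ref{eq:conv}). The argument is therefore little more than assembling eq.(\ref{eq:properJ}), linearity of the convolution, and Theorem \ref{Thm-comm}.
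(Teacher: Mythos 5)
Your proposal is correct and follows essentially the same route as the paper's proof: the first identity is read off as eq.(\ref{eq:motion}) rewritten via eq.(\ref{eq:conv}), and the second follows by inserting eq.(\ref{eq:properJ}), commuting the constant matrix $L(-u)$ past the scalar convolution $\widehat{K}$, and invoking the commutation $\widehat{L}_{u}\circ\widehat{K}=\widehat{K}\circ\widehat{L}_{u}$ from Theorem \ref{Thm-comm}. Your explicit handling of the $\frac{\mu_{0}}{4\pi}$ normalisation and of the two distinct roles of the Lorentz transformation (component matrix versus argument reparametrisation) only makes explicit what the paper states in its opening remark of the proof.
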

\begin{proof}
Note that $L(-u)$ acts on the current-charge $J=(j,C\rho)$, while
$\widehat{L}_{u}$ acts on the spatial/temporal arguments $x,Ct$.
Eq.(\ref{eq:K-operator-form}) is just eq.(\ref{eq:motion}) rewritten
through eq.(\ref{eq:conv}). Recalling from eq.(\ref{eq:properJ})
that $J_{u}=L(-u)\widehat{L}_{u}\left[J\right]$ and from \ref{Thm-comm}
that $\widehat{L}_{u}\circ\widehat{K}=\widehat{K}\circ\widehat{L}_{u}$
and finally that the matrix $L(-u)$ commutes with the scalar operator
$K$ we obtain the result.
\end{proof}

\subsubsection{The Doppler effect}

The perception of wave frequencies in the case the waves are produced
by a source in relative motion with respect to the receptor is known
as \emph{Doppler effect}. The EM Doppler effect plays a fundamental
role in physics \citep{ding60a,mand62,kaiv85}. The goal of this section
is to show that the present theory provides an explanation for the
experimental observations of the Doppler effect. To begin with, all
Doppler experiments consist in comparing the waves perceived by a
detector at rest with respect to the source against the perception
of a detector moving at constant velocity (within acceptable experimental
precision) relatively to the source.

In practice, the task is to obtain the Fourier transform of eq.(\ref{eq:motion}).
We will keep track of this process conceptually, and hence it is better
to use the operator notation from Lemma \ref{Lemma:Op}. The Fourier
transform of a function will be:
\[
{\cal F}_{k,w}\left[\phi\right]=\frac{1}{(2\pi)^{2}}\iint d^{3}x\,dt\,\exp\left(-i(k\cdot x-wt)\right)\phi(x,t)
\]
and is a function of $(k,w)$, where we have made an arbitrary choice
in the election of the sign preceding $wt$ (that does not influence
the conclusion). We will use the following known results:
\begin{eqnarray*}
{\cal F}_{k,w}\left[\widehat{L}_{u}\phi\right] & = & {\cal F}_{k^{\prime},w^{\prime}}\left[\phi\right]\;,\mbox{with }(k^{\prime},{\displaystyle \frac{w^{\prime}}{C}})=L(-u)(k,{\displaystyle \frac{w}{C}})\\
{\cal F}_{k,w}\left[\widehat{K}\phi\right] & = & \frac{1}{w^{2}-C^{2}k^{2}}{\cal F}_{k,w}\left[\phi\right]
\end{eqnarray*}
The first result is the immediate consequence of $L(u)$ being symmetric,
while the second one can be obtained in various ways including direct
integration. Applying these results to eq.(\eqref{eq:K-operator-form})
we obtain

\begin{eqnarray*}
{\cal F}_{k,w}\left[\widehat{K}[J_{u}]\right] & = & {\cal F}_{k,w}\left[\widehat{K}\left[L(-u)\widehat{L}_{u}\left[J\right]\right]\right]\\
 & = & L(-u){\cal F}_{k,w}\left[\widehat{L}_{u}\left[\widehat{K}\left[J\right]\right]\right]\\
 & = & L(-u){\cal F}_{k^{\prime},w^{\prime}}\left[\widehat{K}\left[J\right]\right]\\
 & = & L(-u)\frac{1}{w^{\prime2}-C^{2}k^{\prime2}}{\cal F}_{k^{\prime},w^{\prime}}\left[J\right]\\
 & = & L(-u)\frac{1}{w^{2}-C^{2}k^{2}}{\cal F}_{k^{\prime},w^{\prime}}\left[J\right]
\end{eqnarray*}
where $(k^{\prime},{\displaystyle \frac{w^{\prime}}{C}})=L(-u)(k,{\displaystyle \frac{w}{C}})$.
Thus, in terms of wave frequencies, the Fourier spectrum will have
a peak at $w^{\prime}=\gamma(u)(w-k\cdot u)$ associated with a source
of frequency $w$. The primed quantities describe the characteristics
of the wave as perceived by the detector while the unprimed refer
to the source. When $k\cdot u=|k||u|$ the relative distance between
source and detector increases, $w^{\prime}<w$, and correspondingly
the wavelength shifts towards higher values (red shift). 

Hence, we have proved the following
\begin{thm}
\textbf{(Doppler effect}\label{Doppler-effect}\textbf{)} A detector
(observer) in relative motion with velocity $u$ with respect to an
electromagnetic source emitting current-charge waves of wavelength
and frequency $(k,w)$ detects electromagnetic waves of wavelength
and frequency ${\displaystyle (k^{\prime},\frac{w^{\prime}}{C})=}L(-u)(k,{\displaystyle \frac{w}{C})}$.
\end{thm}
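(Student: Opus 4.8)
The plan is to transport the operator identity of Lemma \ref{Lemma:Op} into Fourier space, where both $\widehat{L}_u$ and $\widehat{K}$ act by an elementary change of argument or by multiplication, and then read off where the characteristic frequency of the source is relocated. First I would start from the perceived potentials written as $(A,\frac{V}{C})_u=\widehat{K}[J_u]=L(-u)\,\widehat{L}_u[\widehat{K}[J]]$, which is exactly the content of Lemma \ref{Lemma:Op}, itself built from the proposal \eqref{eq:properJ} for the inferred current and the commutation $\widehat{L}_u\circ\widehat{K}=\widehat{K}\circ\widehat{L}_u$ of Theorem \ref{Thm-comm}.

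Next I would establish the two transformation rules for ${\cal F}_{k,w}$. The first, ${\cal F}_{k,w}[\widehat{L}_u\phi]={\cal F}_{k',w'}[\phi]$ with $(k',\frac{w'}{C})=L(-u)(k,\frac{w}{C})$, follows by the substitution $(x,Ct)\mapsto L(u)(x,Ct)$ in the defining integral: the boost $L(u)$ has unit determinant, so the measure is preserved, and since the phase $k\cdot x-wt$ is a Lorentz-covariant pairing of $(k,\frac{w}{C})$ with $(x,Ct)$, the symmetry of the boost matrix evident in \eqref{eq:Lform} moves the matrix onto the frequency variable and produces precisely $(k',\frac{w'}{C})=L(-u)(k,\frac{w}{C})$. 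The second rule, ${\cal F}_{k,w}[\widehat{K}\phi]=\frac{1}{w^2-C^2k^2}{\cal F}_{k,w}[\phi]$, is the convolution theorem applied to \eqref{eq:conv}, using that the Fourier symbol of the retarded Green's function of $\Box$ (the Lorenz kernel of Lemma \ref{Lem:sym}) is $\frac{1}{w^2-C^2k^2}$; this is the one genuinely distributional ingredient, and it is consistent with the reciprocity $\Box K=-4\pi\,Id$ of Lemma \ref{Lemma:K-wave-inv}.

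With these in hand, the computation is a chain of substitutions starting from ${\cal F}_{k,w}[\widehat{K}[L(-u)\widehat{L}_u[J]]]$: pull the constant matrix $L(-u)$ through the scalar operators, apply the $\widehat{K}$-rule to extract $\frac{1}{w^2-C^2k^2}$, apply the $\widehat{L}_u$-rule to relocate the evaluation point to $(k',w')$, and finally use the Lorentz invariance of $w^2-C^2k^2$ to identify $\frac{1}{w'^2-C^2k'^2}=\frac{1}{w^2-C^2k^2}$. This yields ${\cal F}_{k,w}[(A,\frac{V}{C})_u]=L(-u)\frac{1}{w^2-C^2k^2}{\cal F}_{k',w'}[J]$, so a source spectrum sharply peaked at $w$ is perceived peaked at $(k',\frac{w'}{C})=L(-u)(k,\frac{w}{C})$, i.e. $w'=\gamma(u)(w-k\cdot u)$, which is the assertion.

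The main obstacle I expect is the second Fourier rule. The Lorenz kernel is a retarded distribution supported on the light cone (Lemma \ref{Lem:sym}), so extracting its symbol $\frac{1}{w^2-C^2k^2}$ requires care with the pole on the cone $w^2=C^2k^2$, i.e. the $i\epsilon$-prescription fixing the retarded branch. Everything else—the covariance of ${\cal F}_{k,w}$ under $\widehat{L}_u$ and the algebra of commuting $L(-u)$ past the scalar operators—is routine once the symmetry and unit determinant of the boost are noted.
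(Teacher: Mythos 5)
Your proposal is correct and follows essentially the same route as the paper: both start from the operator identity $(A,\frac{V}{C})_u=\widehat{K}[J_u]=L(-u)\widehat{L}_u[\widehat{K}[J]]$ of Lemma \ref{Lemma:Op}, apply the same two Fourier rules for $\widehat{L}_u$ and $\widehat{K}$, and read off the peak at $(k^{\prime},\frac{w^{\prime}}{C})=L(-u)(k,\frac{w}{C})$, i.e. $w^{\prime}=\gamma(u)(w-k\cdot u)$; the only (immaterial) difference is that you apply the $\widehat{K}$-rule before relocating to $(k^{\prime},w^{\prime})$, whereas the paper relocates first and then invokes the Lorentz invariance of $w^{2}-C^{2}k^{2}$. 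Your added remark on the retarded $i\epsilon$-prescription for the kernel's symbol addresses a point the paper passes over with ``direct integration'', but it does not change the argument.
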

\begin{rem}
The symmetry \eqref{eq:LorGen} corresponds to expressing the action
in terms of the inferred charge and currents by an observer. As such,
it corresponds to a subjective view of EM.
\end{rem}
\begin{rem}
The Galilean variation that allowed us to obtain the Lorentz force
from the action, eq.\eqref{eq:GalielanVariation}, indicates that
the force experienced by the moving circuit takes the same form but
the potentials to be used correspond to the perceived potentials of
eq.\eqref{eq:motion}.
\end{rem}

\subsection{Mathematical presentation of the Lorentz transformation as a symmetry\label{subsec:Lorentz-extended}}

Since Lorentz' transformations are well known in relation to electromagnetism,
we consider their effect on the action and find their meaning in the
present context.

Let ${\cal I}$ be the infinitesimal generator for the Lorentz transformation
\begin{equation}
{\cal I}_{j}=\left(Ct\frac{\partial}{\partial x_{j}}+\frac{x_{j}}{C}\frac{\partial}{\partial t}\right)\label{eq:LorGen}
\end{equation}
which together with the generators of the rotations
\[
{\cal J}_{i}=\sum_{jk}\epsilon_{jki}\left(x_{k}\frac{\partial}{\partial x_{i}}-x_{i}\frac{\partial}{\partial x_{k}}\right)
\]
(with $\epsilon_{jki}$ Kronecker's antisymmetric tensor) complete
the Lie algebra of the Poincaré-Lorentz group \citep{gilm74}.
\begin{thm}
\label{Thm-NoLor} The electromagnetic action ${\cal A}$ \eqref{eq:action}
transforms into an equivalent action ${\cal A}^{\prime}$ when the
infinitesimal transformations 
\[
\hat{\delta}=\sum_{i}\left(\delta\theta_{i}{\cal J}_{i}+\delta v_{i}{\cal I}_{i}\right)
\]
operate on $\left(j,C\rho\right)_{1,2}$ simultaneously and $\frac{d\delta v_{i}}{dt}=0$.
\end{thm}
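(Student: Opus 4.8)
The plan is to read the action in its symmetric form \eqref{eq:action-symm} as built from two independently controlled pieces: the propagation kernel $K(x-y,s-r)$ and the bilinear $j_1\cdot j_2-C^2\rho_1\rho_2$. Writing $J_a=(j_a,C\rho_a)$, the latter is exactly the Minkowski inner product $\langle J_1,J_2\rangle$ of the two current--charge four--vectors, with $J_1$ evaluated at $(y,r)$ and $J_2$ at $(x,s)$. Because the generators ${\cal I}_i$ of \eqref{eq:LorGen} and ${\cal J}_i$ exponentiate, via \eqref{eq:groupalgebra}, to the finite boosts and rotations of the Poincaré--Lorentz group, I would first establish the finite statement --- invariance of $\mathcal{A}$ when both currents are transformed \emph{simultaneously} as proper four--vector fields according to \eqref{eq:properJ} --- and then recover the infinitesimal claim by differentiating at the identity.

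For the finite step, substitute $(J_a)_u(X)=L(-u)J_a(L(u)X)$, with $X=(x,t)$, into \eqref{eq:action-symm} and combine three facts. First, the component matrix $L(-u)$ is an isometry of the Minkowski form, so $\langle L(-u)P,L(-u)Q\rangle=\langle P,Q\rangle$ and the bilinear becomes $\langle J_1(L(u)(y,r)),J_2(L(u)(x,s))\rangle$; this is precisely where transforming \emph{both} currents is essential. Second, changing variables $(x,s)\mapsto L(u)(x,s)$ and $(y,r)\mapsto L(u)(y,r)$ leaves the four--volume measures invariant because $\det L=1$. Third, by Theorem \ref{Thm-comm} the kernel satisfies $K(L_u(x,s);L_u(y,r))=K(x,s;y,r)$, so in the new variables it returns to $K(x-y,s-r)$. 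Assembling these reproduces the original integrand, giving $\mathcal{A}'=\mathcal{A}$ up to contributions from the integration limits; the spatial boundary terms vanish under the decay hypotheses on $|B|^2,|E|^2,A,j,V,\rho$ already used in Theorems \ref{thm:TheForce} and \ref{thm:wave}. Rotations follow from the identical three steps but are immediate, since $K$ depends only on $|x-y|$ and $s-r$ while $j_1\cdot j_2$ and $\rho_1\rho_2$ are rotation scalars.

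The infinitesimal statement is then the derivative of this finite invariance at $u=0$. The derivative of the coordinate part of \eqref{eq:properJ} along a boost is the operator ${\cal I}_i$ of \eqref{eq:LorGen} (and ${\cal J}_i$ for rotations), while the derivative of the component matrix $L(-u)$ supplies the accompanying mixing of $j$ with $C\rho$; together these constitute the action of $\hat\delta$ on $(j,C\rho)_{1,2}$. The hypothesis $\frac{d\delta v_i}{dt}=0$ is what guarantees that $\hat\delta$ generates a genuine Lie--algebra element, matching the constant--velocity setting of Lemma \ref{Lemma:delayed-interaction}; a time--dependent $\delta v$ would produce extra $\partial_t(\delta v)$ terms and spoil the cancellation. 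Equivalently, one can argue directly at the infinitesimal level, showing that $\hat\delta$ applied to the integrand is a total four--divergence that integrates to a vanishing boundary term.

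The step I expect to be the main obstacle is keeping the two pieces of the four--vector transformation aligned. The generators ${\cal I}_i,{\cal J}_i$ as displayed act only on the arguments $(x,t)$, whereas invariance of the Minkowski bilinear requires the accompanying component map $L(-u)$ on $(j,C\rho)$ --- the transformation meant in \eqref{eq:properJ} and in Lemma \ref{Lemma:Op} --- and the kernel invariance of Theorem \ref{Thm-comm} may be invoked only after the change of variables. Ensuring that the measure Jacobian, the transformed kernel, and the component isometry conspire so that every cross term cancels and no boundary contribution survives is where the care lies; once organised, each ingredient is separately Lorentz invariant and the conclusion is essentially automatic.
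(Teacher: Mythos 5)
Your route differs from the paper's. You prove \emph{finite} invariance of the action under the full four--vector transformation \eqref{eq:properJ} (component matrix $L(-u)$ together with the argument change) and then differentiate at the identity. The paper instead works directly at the infinitesimal level: using that the kernel commutes with the generators (Theorem \ref{Thm-comm}), the entire variation is written, via the Leibniz rule, as the first--order differential operator $\hat\delta$ applied to the scalar integrand $A^{1}\cdot j_{2}-V^{1}\rho_{2}$; since that operator is a divergence--free vector field on $\mathbb{R}^{3+1}$, the spatial part is killed by Gauss' theorem and the decay hypotheses, while the time part is a total $s$--derivative producing ${\cal F}(t)-{\cal F}(t_{0})$, which is irrelevant to the variational problem. (The ``equivalently, one can argue directly at the infinitesimal level'' alternative you mention in passing is in fact the paper's proof.) Your three finite ingredients --- Minkowski isometry of $L(-u)$, unit Jacobian, kernel invariance from Theorem \ref{Thm-comm} --- are all correct, and your strategy can be made to work.

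Two points need repair, however. First, the transformation you use is not the one in the statement. The theorem's $\hat\delta$ is built from ${\cal I}_{i}$ and ${\cal J}_{i}$ of \eqref{eq:LorGen}, which act \emph{only} on the arguments of $(j,C\rho)_{1,2}$; there is no component mixing. You assert that ``invariance of the Minkowski bilinear requires the accompanying component map $L(-u)$'', but this is backwards: the coordinate--only variation already integrates to boundary terms (that is the paper's computation), while the mixing part, applied to \emph{both} currents simultaneously, contributes nothing pointwise, because the symmetric boost generator is anti--self--adjoint with respect to the Minkowski form --- which is precisely the infinitesimal content of the isometry identity you invoke. So your finite statement does imply the stated theorem, but only once this cancellation is made explicit; as written, you prove invariance of a different transformation and leave the bridge to \eqref{eq:LorGen} unbuilt. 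Second, the temporal boundary. The action \eqref{eq:action-symm} is taken over a finite time slab $[t_{0},t]$, which a finite boost does not preserve: its image is a tilted slab, and the discrepancy is not a spatial--decay issue. This is exactly why the theorem claims an ``equivalent'', not an equal, action: one must show the discrepancy is a functional evaluated at the time endpoints, where variations vanish. Your phrase ``up to contributions from the integration limits'' glosses over precisely this point, which is transparent in the paper's infinitesimal computation (the time part of the generator yields ${\cal F}(t)-{\cal F}(t_{0})$ directly). If you keep the finite--first strategy, you must track the domain mismatch and verify that its derivative at $u=0$ is such an endpoint functional.
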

\begin{proof}
The result follows from the observation that the kernel $K$ in \eqref{eq:kernel}
commutes with the six generators as a result of Theorem \eqref{Thm-comm},
and that, integrating by parts in space and time the action of $\hat{\delta}$
over $\left(j,C\rho\right)_{2}$ can be seen as an action over $\left(j,C\rho\right)_{1}$
preceded by a negative sign, and then, both actions compensate to
first order. Thus, the infinitesimal action of any element of the
Lie algebra acting on both subsystems (primary and secondary) corresponds
to the identity. We have that
\[
{\cal A}={\cal A}^{\prime}+{\cal F}(t)
\]
 with ${\cal F}(t)$ a functional of the potentials and currents evaluated
at the time $t$. Since all variations are considered to be zero at
the extremes of the time-interval, ${\cal F}(t)$ contributes to zero
to the variational calculation. In terms of their variations, ${\cal A}$
and ${\cal A}^{\prime}$ are equivalent. See the Appendix \eqref{subsec:Proof-of-NoLor}
for the algebraic details.
\end{proof}
The requirement for $\delta v$ to be constant in time is familiar
to any one acquainted with Lorentz' transformations. It is interesting
to mention that in the present context this requirement can be lifted
by defining the variation as
\begin{equation}
\hat{\delta}_{a}=\sum_{i}\left(\delta\theta_{i}{\cal J}_{i}+\delta v_{i}{\cal I}_{i}+\frac{1}{2C}\frac{d\delta v_{i}}{dt}x_{i}\right)\label{eq:extended-Lorentz}
\end{equation}

\section{Discussion and conclusions}

From the point of view of pragmaticist epistemology \citep{peir55}
all currents and charges are inferred. What we know about them are
their effects, hence charge and current are ``that what produces
this and such effects'', i.e., ideas, inferred entities, not directly
accessible to our senses. However, charges and currents were originally
associated to forces measured by a torsion balance and deflections
of needles observed in galvanometers. Such primitive methods constitute
the original definition of currents and charges and are available
only for an observer at rest with the measuring apparatus since they
are based upon material connections of circuits. In the text, we have
restricted the use of ``inferred'' to those measurements that are
performed based on action at a distance, i.e., without a ``material''
connection between the circuits (in particular, when this procedure
is implied by the need of measuring while the detector is moving relative
to the primary circuit, if the intricacies of a circuit continuously
deforming are to be avoided). Thus, the scientist can perceive (measure)
currents and charges using the original defining method if at rest
relative to the source and the effects (as encrypted in forces, fields
and potentials) of such events if the observer is at rest relative
to the receiver. Currents and charges in the source are only inferred
by the observer at rest with the receiver while forces, fields and
potentials are inferred by the observer at rest with respect to the
source.

We have shown that Electromagnetism can be formulated in terms of
fields associated to sources as well as fields associated to receivers,
this symmetry is broken in a construction that focuses exclusively
in S-fields rather than R-fields. Remark \ref{Rem:SR-fields} exposes
the relation between S- and R- fields, while the relation between
source and receiver descriptions is given for example by the pair
of equations \eqref{eq:kernel} and \eqref{eq:motion}. The restoration
of this symmetry explains how the action-reaction law of Newton's
mechanics is identically broken (Subsection \ref{subsec:Source/receiver-symmetry})
in the standard construction of Electromagnetism.

Most interestingly, the present approach based on eqs. \eqref{eq:kernel},
\eqref{eq:properJ} and \eqref{eq:motion}, is consistent with normal
Electrodynamics and explains two fundamental concerns of the original
theory, namely that electromagnetic waves propagate with the same
electromagnetic parameter $C$ regardless of the state of relative
motion between source and detector, and that the electromagnetic Doppler
effect is acted by a Lorentz boost of parameter $u$, in agreement
with the accepted description. These results are obtained within the
original framework of the theory, in particular preserving the Euclidean
character of the auxiliary space-time, $\mathbb{R}^{3+1}=\mathbb{R}^{3}\times\mathbb{R}^{1}$,
which fulfils the conditions imposed by spatial relations or relational
space. In terms of interpretations, there is no need to regard the
universal constant ${\displaystyle C=(\mu_{0}\epsilon_{0})^{-\frac{1}{2}}}$
as a velocity, nor to have something travelling between source and
detector when considering electromagnetic interactions.

The complete set of equations of electromagnetism (Maxwell's equations,
continuity equation and Lorentz' force) arise in the present form
as the result of postulating Lorenz' delayed-action-at-a-distance
\ref{eq:Lorenz} and Lorentz' action integral, to be used in the principle
of least action \ref{eq:action}. Lorenz' postulate has empirical
basis while Lorentz' action is a (mathematical) organisation principle
that has been considered fundamental by several authors as for example
Poincaré \citep{poin13}. It is interesting to notice that before
the irruption of the ``second physicist'' \citep{jung17}, i.e.,
the theoretical physicist, theory in physics had a meaning close to
``mathematically organised empirical observations''. This is the
spirit of Maxwell's work but it is as well the spirit in Newton, Ampere,
Gauss and many others in the earlier times of physics. This epistemic
position was heavily attacked by proponents of the ether such as Heaviside
\citep{heaviside2011electrical}, Hertz \citep{hert93} and particularly
Clausius \citep{clau69} who directly attacked Gauss' conception in
the works by Riemann \citep{riem67}, Betti \citep{bett67} and Neumann
\citep{neum68}.

The present formulation addresses an issue recognised
by \citet[p. 228]{maxwell1990scientific}: 
\begin{quote}
... According to a theory of electricity which is making great progress
in Germany, two electrical particles act on one another directly at
a distance, but with a force which, according to Weber, depends on
their relative velocity, and according to a theory hinted at by Gauss,
and developed by Riemann, Lorenz, and Neumann, acts not instantaneously,
but after a time depending on the distance. The power with which this
theory, in the hands of these eminent men, explains every kind of
electrical phenomena must be studied in order to be appreciated {[}...{]}
\end{quote}
And comparing with his preferred theory that ``attributes electric
action to tensions and pressures in an all-pervading medium'' he
writes:
\begin{quote}
That theories apparently so fundamentally opposed should have so large
a field of truth common to both is a fact the philosophical importance
of which we cannot fully appreciate till we have reached a scientific
altitude from which the true relation between hypotheses so different
can be seen. 
\end{quote}
About one and a half century after Maxwell's conference we can discuss
his philosophical inquire. Both theories are in perfect mathematical
correspondence as they are with currently accepted electromagnetism
but they differ in the abduction and interpretation as well as in
the use of auxiliary concepts. Current electromagnetism relies heavily
on the inferred idea of space-time and a mechanical analogy of the
interaction. This approach is effective but leads us to embrace a
new form of space-time, a necessary belief that not all of us are
willing to admit. More precisely, current electromagnetism constructs
first the space (relating it to Lorentz transformations) and only
next spatial relations. In our view this order leads to logical inconsistencies
\citep{Solari22-phenomenologico} at the time of construction, despite
the success achieved in terms of experimental comparisons. The present
approach solves the problem by disposing of the subjective (auxiliary)
space resting directly on spatio-temporal relations in such a way
that rather than resting on just one transformation, a shared attribute
of previous approaches, we find a harmonious coexistence of Galilean
and Lorentzian transformations, a sort of reconciliatory mid-point.
To achieve this views we had to accept first that space and time are
not an a priori of knowledge as Kant thought \citep{kant87} but rather
a construction of the child as Piaget experimentally found \citep{piag99}.
Moreover, all these apparently conflicting approaches are needed for
science to progress.

This view only uses (subjective) space and time as an auxiliary element
when and if needed. The symmetry associated to the arbitrary decision
of using a reference system is the one expressed by Galilean transformations
\citep{sola18b}. In this context, inertial systems as auxiliary reference
systems are constructed on the basis of the idea of free-bodies \citep{newt87,thomson1884,sola22}.
This structure is underlying the work but not explicitly used as we
have preferred to avoid reference frames.

The Lorentz transformations correspond in this construction to an
endomorphism of relational space-time and are relevant only to the
propagation of electromagnetic action. They are associated to the
(unexpected) symmetry concerning perception and inferred charges and
currents. While the full Poincaré-Lorentz group relates different
but equivalent perceptions of the electromagnetic action, the particular
set of Lorentz transformations relate the perception from a detector
at rest with respect to the source to the perception of a detector
in motion relative to the source with (invariant) velocity $u$. Such
relation can be obtained only for $|u|<C$. We emphasise that the
Poincaré-Lorentz group coexists with the Galilean symmetry of the
description, although not all the equations, particularly not all
differential equations, are transparent as expressions showing the
symmetry. The integral presentation is in this respect more revealing.

We finally stress that symmetries as requirements of reason pre-exist
physics and equations. They enter physics as a demand of reason in
our quest to construct the cosmos, this is, to put in harmony our
perceptions of the real-sensible.

The ether was the immediate consequence of attempts to understand
electromagnetism by analogy to mechanical phenomena. Special relativity
introduced an analogy of the forms, the Principle of relativity, without
an understanding of the fundaments of the principle. It soon became
evident that if analogies with mechanics would be preserved, the metric
of space-time had to be changed. Yet, hiding the hypothesis the statement
reads: electromagnetism imposes us to adopt a different metric of
space-time than the Cartesian one used in its construction. Next,
to accept this unmatching between the construction moment and the
explanatory moment of science requires the exclusion of the first,
leaving us with a science without understanding, supported only upon
its predictive success, a technology of prediction, since success
is the quality measure of any technology. Philosophers like Popper
and Reichenbach considered their task to support the theories of scientists
like Einstein. Consequently, they dropped all critical examination
of matters, finally endorsing a program that was put forward by 1870,
``physics must henceforth pursue the sole aim of writing down for
each series of phenomena ... equations from which the course of the
phenomena can be quantitatively determined; so that the sole task
of physics consisted in using trial and error to find the simplest
equations''. Notice that even ``trial and error'', the method favoured
by Popper, was already indicated. Such program is the instrumentalist
program of science, aiming at dominating nature, a perfect mate of
considering the Earth an infinite source of resources for the development
of the capitalist society. We argue then that it has been forced upon
us by social decisions that made nearly impossible the survival of
the critical motion of reason. Conversely, by restating critical reasoning,
we have been able to construct an electromagnetism that is more consilient
than the received wisdom, it does not need to reform space-time and
consequently makes no call for the abandonment of the construction
moment. Reason can organise the chaos that reaches our senses, harmony
is still an enticing possibility. If the child develops abstraction
to understand the possible instead of being forced to accept the given,
we need to put abstraction to work. We have been told that there is
just one possible science, the given science, the science of capitalism.
We have proved by presenting a counter example that the statement
is wrong. Science is not only what scientists do (the given) but what
humans can do as well, critical and ethical science, a science conscious
of its ignorance. We close with ancient words by Chuang Tzu:
\begin{quote}
Now you have come out beyond your banks and borders and have seen
the great sea -- so you realize your own pettiness. From now on it
will be possible to talk to you about the Great Principle. \citep[Autumn Floods]{ChuangTzu68}
\end{quote}

\section*{Acknowledgements}

We thank Alejandro Romero Fernández, Olimpia Lombardi and Federico
Holik for valuable discussions.

\section*{Declaration of Interest}

The authors declare that there exists no actual or potential conflict of interest including any financial,
personal or other relationships with other people or organizations within three years of beginning the
submitted work that could inappropriately influence, or be perceived to influence, their work.


\appendix

\section{Some Proofs}

\subsection{Proof of Lemma \ref{lem:wave}\label{Proof-lem-wave}}
\begin{proof}
We perform the calculation in detail only for $A$, since the other
one is similar. We use the shorthand $r=|x-y|$.
\begin{eqnarray*}
\nabla_{x}A_{i} & = & \frac{\mu_{0}}{4\pi}\int d^{3}y\,\left(j_{i}\nabla_{x}\frac{1}{r}-\frac{\frac{\partial}{\partial t}j_{i}\nabla_{x}\frac{r}{C}}{r}\right)\\
\Delta A_{i}
	& = & \nabla_{x}\cdot\nabla_{x}A_{i}\\
	& = & \frac{\mu_{0}}{4\pi}\int d^{3}y\,\left(j_{i}\Delta\frac{1}{r}-2\left(\nabla_{x}\frac{1}{r}\right)\cdot\left(\frac{\partial}{\partial t}j_{i}\nabla_{x}\frac{r}{C}\right)
	\right)- \\ & & -\frac{\mu_{0}}{4\pi}\int d^{3}y\,  \left(
	\frac{\frac{\partial}{\partial t}j_{i}\Delta\frac{r}{C}}{r}+\frac{\frac{\partial^{2}}{\partial t^{2}}j_{i}}{r}|\nabla_{x}\frac{r}{C}|^{2}\right)
\end{eqnarray*}
Moreover, standard vector calculus identities give
\begin{eqnarray*}
\frac{\partial}{\partial t}j_{i}\left(2\nabla\frac{1}{r}\cdot\nabla\frac{r}{C}+\frac{\Delta\frac{r}{C}}{r}\right)
	& = & 0\\ |\nabla\frac{r}{C}|^{2} & = & \frac{1}{C^{2}}
\end{eqnarray*}
and therefore
\begin{eqnarray*}
\Delta A_{i}(x,t) & = & \frac{\mu_{0}}{4\pi}\int d^{3}y\,j_{i}(y,t-\frac{r}{C})\Delta\left(\frac{1}{r}\right)+\left(\frac{1}{C^{2}}\right)\frac{\mu_{0}}{4\pi}\int d^{3}y\,\frac{\partial^{2}}{\partial t^{2}}\frac{j_{i}(y,t-\frac{r}{C})}{r}
\end{eqnarray*}
The time derivative in the last term can be extracted outside the
integral, thus yielding,
\begin{eqnarray*}
\Box A_{i}(x,t) & = & \Delta A_{i}(x,t)-\left(\frac{1}{C^{2}}\right)\frac{\mu_{0}}{4\pi}\int d^{3}y\,\frac{\partial^{2}}{\partial t^{2}}\frac{j_{i}(y,t-\frac{r}{C})}{r}\\
 & = & \Delta A_{i}(x,t)-\left(\frac{1}{C^{2}}\right)\frac{\partial^{2}}{\partial t^{2}}A_{i}(x,t)\\
 & = & \frac{\mu_{0}}{4\pi}\int d^{3}y\,j_{i}(y,t-\frac{|x-y|}{C})\Delta\left(\frac{1}{r}\right)\\
 & = & -\mu_{0}j_{i}(x,t)
\end{eqnarray*}
\end{proof}

\subsection{Proof of Theorem \ref{thm:wave}\label{Proof-thm-wave}}
\begin{proof}
The result follows from the computation of the extremal action under
the constraints
\begin{eqnarray*}
(V-\mathsf{V})\chi & = & 0\\
(A-\mathsf{A})\chi & = & 0
\end{eqnarray*}
Multiplying the constraints by the Lagrange multipliers $\lambda$
and $\kappa$ (the latter a vector), while we use the shorthand notations
$B=\nabla\times A$ and $E=\left(-\frac{\partial A}{\partial t}-\nabla V\right)$,
we need to variate the constrained electromagnetic action
\[
{\cal A}=\frac{1}{2}\int dt\left(\int\left(\frac{1}{\mu_{0}}|B|^{2}-\epsilon_{0}|E|^{2}-\kappa\cdot(A-\mathsf{A})\chi+\lambda(V-\mathsf{V})\chi\right)\,d^{3}x\right).
\]
Varying the integrand we obtain
\begin{eqnarray*}
\delta\mathcal{A} & = & \int\,dt\left(\int\left(\frac{1}{\mu_{0}}\left(\nabla\times A\right)\cdot\left(\nabla\times\delta A\right)-\epsilon_{0}\left(\nabla V\cdot\nabla\delta V+\frac{\partial A}{\partial t}\frac{\partial\delta A}{\partial t}\right)\right.\right.\\
 &  & \left.\left.-\epsilon_{0}\left(\nabla V\cdot\frac{\partial\delta A}{\partial t}+\nabla\delta V\cdot\frac{\partial A}{\partial t}\right)-\chi\kappa\,\delta A+\chi\lambda\,\delta V\right)\,d^{3}x\right)
\end{eqnarray*}
Partial integrations in time and standard vector calculus give the
following identities: 
\begin{eqnarray*}
\int dt\,\frac{\partial A}{\partial t}\frac{\partial\delta A}{\partial t} & = & \left[\delta A\cdot\frac{\partial A}{\partial t}\right]-\int dt\,\delta A\cdot\frac{\partial^{2}A}{\partial t^{2}}\\
\int dt\,\nabla V\cdot\frac{\partial\delta A}{\partial t} & = & \left[\delta A\cdot\nabla V\right]-\int dt\,\delta A\cdot\nabla\frac{\partial V}{\partial t}\\
\left(\nabla\times A\right)\cdot\left(\nabla\times\delta A\right) & = & \nabla\times\left(\nabla\times A\right)\cdot\delta A-\left[\nabla\cdot\left(\left(\nabla\times A\right)\times\delta A\right)\right]\\
\left(\nabla V+\frac{\partial A}{\partial t}\right)\cdot\nabla\delta V & = & \left[\nabla\cdot\left(\left(\nabla V+\frac{\partial A}{\partial t}\right)\delta V\right)\right]-\delta V\nabla\cdot\left(\nabla V+\frac{\partial A}{\partial t}\right)
\end{eqnarray*}
The terms in square brackets vanish in the variation either for the
vanishing variation at endpoints or because of Gauss theorem applied
to functions decaying fast enough at infinity. Hence, 
\begin{eqnarray*}
\delta\mathcal{A} & = & \int dt\,\int d^{3}x\,\left(\frac{1}{\mu_{0}}\left(\nabla\times A\right)\cdot\left(\nabla\times\delta A\right)-\epsilon_{0}\left(-\delta V\nabla\cdot\left(\nabla V+\frac{\partial A}{\partial t}\right)\right.\right.\\
 &  & \left.\left.-\delta A\cdot\frac{\partial^{2}A}{\partial t^{2}}-\delta A\cdot\nabla\frac{\partial V}{\partial t}\right)-\chi\kappa\cdot\delta A+\chi\lambda\delta V\right).
\end{eqnarray*}
Being $\delta A$ and $\delta V$ independent, we obtain 
\begin{eqnarray*}
\frac{1}{\mu_{0}}\nabla\times\left(\nabla\times A\right)+\epsilon_{0}\left(\frac{\partial^{2}A}{\partial t^{2}}+\nabla\frac{\partial V}{\partial t}\right) & = & \chi\kappa\\
-\epsilon_{0}\nabla\cdot\left(\nabla V+\frac{\partial A}{\partial t}\right) & = & \chi\lambda
\end{eqnarray*}
or equivalently 
\begin{eqnarray*}
\frac{1}{\mu_{0}}\nabla\times B-\epsilon_{0}\frac{\partial E}{\partial t} & = & \chi\kappa\\
\epsilon_{0}\nabla\cdot E & = & \chi\lambda
\end{eqnarray*}
which allows us to identify $j=\chi\kappa$ (the density of current
inside the material responsible for $\mathsf{A}$) and ${\displaystyle \rho=}\chi\lambda$
(the density of charge responsible for $\mathsf{V}$), thus proving
the first result. Finally, the continuity equation follows from0
\[
0=\nabla\cdot\left(\frac{1}{\mu_{0}}\nabla\times B-\epsilon_{0}\frac{\partial E}{\partial t}\right)+\frac{\partial}{\partial t}\left(\epsilon_{0}\nabla\cdot E\right)=\nabla\cdot j+\frac{\partial\rho}{\partial t}.
\]

Note also that taking curl on the first equation we verify that $B$
satisfies a wave equation. Inserting $\nabla\times E$ in the time-derivative
of the first equation and adding the gradient of the second equation,
we obtain a wave equation for $E$.

Further, $\frac{1}{\mu_{0}}\nabla\cdot A+\epsilon_{0}\frac{\partial V}{\partial t}=0$
implies both that ${\displaystyle \nabla\frac{\partial V}{\partial t}=-C^{2}\nabla\left(\nabla\cdot A\right)}$
and ${\displaystyle \nabla\cdot\frac{\partial A}{\partial t}=-\frac{1}{C^{2}}\frac{\partial^{2}V}{\partial t^{2}}}$.
Substituting each relation in the corresponding equation, we obtain
eq(\ref{eq:diff-wave}), thus proving the Corollary.
\end{proof}

\subsection{Proof of Theorem \ref{Thm-NoLor}\label{subsec:Proof-of-NoLor}}
\begin{proof}
The rotational invariance is immediate, since for any rotation matrix
$R,$the change of coordinates $x^{\prime}=Rx$ (along with the corresponding
change for $y$), keeps the distance $|R(x-y)|=|x-y|$ invariant.
Hence, for the kernel in eq.(\ref{eq:kernel}) and any electromagnetic
kernel depending on $|x-y|$ the action integral is invariant under
rotations. Let $u$ be the velocity associated to a Lorentz transformation,
which is constant by hypothesis. The proposed variation reads 
\begin{eqnarray*}
\hat{\delta}\mathcal{A} & = & \int_{-\infty}^{t}ds\int d^{3}x\,\hat{\delta}\left[A^{1}(x,s)j_{2}(x,s)-V^{1}(x,s)\rho_{2}(x,s)\right]\\
 & = & \int_{-\infty}^{t}ds\int d^{3}x\,\left(Cs\,\delta u\cdot\nabla_{x}+(\frac{x}{C}\cdot\delta u)\partial_{s}\right)\left[A^{1}j_{2}-V^{1}\rho_{2}\right](x,s)
\end{eqnarray*}
where ${\displaystyle \mathcal{I}(x,s)=Cs\,\delta u\cdot\nabla_{x}+(x\cdot\delta u)\partial_{Cs}}$
is the Lorentz generator. By Gauss Theorem the following integral
vanishes for any function $F$ inheriting the behaviour of $A,j$
at infinity:
\begin{eqnarray*}
\int_{K}d^{3}x\,\delta u\cdot\nabla F(x) & = & \int_{\partial K}F(x)\delta u\cdot dS=0
\end{eqnarray*}
Finally, 
\begin{eqnarray*}
\int_{-\infty}^{t}ds\,\frac{\partial}{\partial s}\int d^{3}x\,(x\cdot\delta u)G(x,s) & = & \mathcal{F}(t)-\mathcal{F}(t_{0})
\end{eqnarray*}
for some function $\mathcal{F}$ depending only of $t$. However by
the nature of the variational process, $\mathcal{F}$ does not contribute
to the variation.
\end{proof}

\section{The Lorentz transformation\label{sec apendice LT}}

The infinitesimal generator of the Lorentz transformation in eq.(\ref{eq:TL-coord})
reads 
\[
{\cal I}_{j}=\left(\begin{array}{cc}
\mathbf{0} & \frac{v}{C}\\
(\frac{v}{C})^{T} & 0
\end{array}\right).
\]
The Lorentz transformation for finite $v$ is obtained by exponentiation
\citep{gilm74}, yielding the $4\times4$ matrix expression $TL(v)$
for the transformation elements, where
\[
TL(v)=\left(\begin{array}{cc}
W & X\\
X^{\dagger} & Y
\end{array}\right)
\]
is formed by the $3$-vector $X=\sinh\left(\left|\frac{v}{C}\right|\right){\displaystyle \frac{v}{|v|}}$,
the scalar $Y=\sqrt{1+|X|^{2}}={\displaystyle \cosh\left(\left|\frac{v}{C}\right|\right)}$
and the $3\times3$ matrix $W=Id+\left(\cosh\left(\left|\frac{v}{C}\right|\right)-1\right){\displaystyle \frac{vv^{\dagger}}{|v|^{2}}}$,
where $v\in\mathbb{R}^{3}$ is a parameter classifying the different
transformations. A better known expression for the Lorentz transformation
arises from the change of variables $u={\displaystyle C\hat{v}\,\tanh\left|\frac{v}{C}\right|}$
\citep{gilm74}. In such terms,
\[
L(u)=\left(\begin{array}{cc}
Id+(\gamma-1)\hat{u}\hat{u}^{\dagger} & {\displaystyle \gamma\frac{u}{C}}\\
{\displaystyle \gamma\frac{u^{\dagger}}{C}} & \gamma
\end{array}\right);\quad L\left(u(v)\right)\equiv TL(v),
\]
where $\gamma\left(u(v)\right)={\displaystyle \frac{1}{\sqrt{1-\left|\frac{u(v)}{C}\right|^{2}}}}=\cosh\left(\left|\frac{v}{C}\right|\right)$.
Note that for ${\displaystyle {\displaystyle \left|\frac{v}{C}\right|\ll1}}$,
we have that ${\displaystyle \left|\frac{u}{C}\right|=\left|\frac{v}{C}\right|+O\left(|\frac{v}{C}|^{3}\right)}$.

\end{document}